\pgfplotsset{compat=1.4}
\newcommand{\eps}{\varepsilon}
\newcommand{\Oish}{\widetilde{O}}
\DeclareMathOperator{\dist}{dist}
\newtheorem{oq}{Open Question}
\newtheorem{claim}{Claim}
\newtheorem{definition}{Definition}
\newtheorem{theorem}{Theorem}
\newtheorem{lemma}{Lemma}
\DeclareMathOperator{\op}{\otimes}
\begin{document}

\title{The 4/3 Additive Spanner Exponent is Tight\footnote{Authors' emails: \{abboud, gbodwin\}@cs.stanford.edu.  This work was supported by NSF Grants CCF-1417238, CCF-1528078 and CCF-1514339, and BSF Grant BSF:2012338. This work was done (in part) while the authors were visiting the Simons Institute for the Theory of Computing.}}

\author{Amir Abboud}
\author{Greg Bodwin}

\affil{Stanford University}

\date{}

%\toappear{}

%\thispagestyle{empty}

\maketitle

\begin{abstract}
\quad A spanner is a sparse subgraph that approximately preserves the pairwise distances of the original graph.
It is well known that there is a smooth tradeoff between the sparsity of a spanner and the quality of its approximation, so long as distance error is measured {\em multiplicatively}.
A central open question in the field is to prove or disprove whether such a tradeoff exists also in the regime of \emph{additive} error.  That is, is it true that for all $\eps>0$, there is a constant $k_\eps$ such that every graph has a spanner on $O(n^{1+\eps})$ edges that preserves its pairwise distances up to $+k_{\eps}$?
Previous lower bounds are consistent with a positive resolution to this question, while previous upper bounds exhibit the beginning of a tradeoff curve:
all graphs have $+2$ spanners on $O(n^{3/2})$ edges, $+4$ spanners on $\Oish(n^{7/5})$ edges, and $+6$ spanners on $O(n^{4/3})$ edges. 
However, progress has mysteriously halted at the $n^{4/3}$ bound, and
despite significant effort from the community, the question has remained open for all $0 < \eps < 1/3$.

Our main result is a surprising negative resolution of the open question, even in a highly generalized setting.  We show a new information theoretic \emph{incompressibility} bound: there is no function that compresses graphs into $O(n^{4/3 - \eps})$ bits so that distance information can be recovered within $+n^{o(1)}$ error.
As a special case of our theorem, we get a \emph{tight} lower bound on the sparsity of additive spanners: the $+6$ spanner on $O(n^{4/3})$ edges cannot be improved in the exponent, even if any {\em subpolynomial} amount of additive error is allowed.  
Our theorem implies new lower bounds for related objects as well; for example, the twenty-year-old $+4$ emulator on $O(n^{4/3})$ edges also cannot be improved in the exponent unless the error allowance is polynomial.

Central to our construction is a new type of graph product, which we call the \emph{Obstacle Product}.
Intuitively, it takes two graphs $G,H$ and produces a new graph $G \op H$ whose shortest paths structure looks locally like $H$ but globally like $G$.

\end{abstract}

\section{Introduction}

A \emph{spanner} of a graph is a sparse subgraph that approximately preserves the distances of the original graph.
Spanners were introduced by Peleg and Sch\"{a}ffer \cite{PS89} after they naturally arose in different contexts in the late 1980s \cite{Awe85,PU89,ADD+93}.
Today, spanners are an indispensable tool in many well-studied fields, such as graph compression, synchronization in distributed networks, routing schemes, approximation algorithms for all pairs shortest paths, and more.

% [orginally, multiplicative.]
The early results on the topic were mostly about multiplicative $t$-spanners, which are subgraphs that preserve all distances up to a multiplicative constant $t$.
A landmark upper bound result 
%(see also \cite{PS89, ADD+93, RTZ05, BK06}) 
states that every graph has a multiplicative $(2t-1)$-spanner on $O(n^{1+1/t})$ edges for all positive integers $t$ \cite{ADD+93}.
A well-known lower bound argument shows that this tradeoff is tight under the popular Girth Conjecture of Erd\"{o}s \cite{girth}.
Graphs on $\Omega(n^{1+1/t})$ edges with shortest cycles length $2t+2$ (which exist under the conjecture) cannot be sparsified at all without increasing a distance of a removed edge from $1$ to $2t+1$.
Therefore, the picture for multiplicative spanners is essentially complete.

% [additive is better but seemed too good to hope for. only results for restricted graphs.]
A \emph{$+k$ additive spanner} of a graph $G$ is a subgraph $H$ that preserves distances up to an {\em additive} constant $k$, i.e. for any two nodes $u,v$ in $G$ we have $\dist_G (u,v) \leq \dist_H(u,v) +k$.
In most contexts, additive error is more practically appealing than multiplicative error: a detour that adds two minutes to a trip is preferable to one that triples its length.
At first, this seemed hard to achieve; nontrivial additive spanners were only known for special types of graphs \cite{LS91,LS93}.
However, the seminal paper of Aingworth, Indyk, Chekuri, and Motwani \cite{ACM96,AICM99} showed the following surprise: all graphs have $+2$ additive spanners on just $\Oish(n^{3/2})$ edges (see also \cite{DHZ96,EP04,RTZ05,TZ06,Knu14,BKMP10}).
In other words, the {\em multiplicative} $3$-spanner shares a tight sparsity bound with the {\em additive} $+2$ spanner.
%[this result generated a lot of excitement and optimism and the main focus of the research community has been on additive spanners since then.]
This result caused significant optimism in the community regarding the existence of good additive spanners, which quickly became the central objects of study: could it be that graphs can be sparsified to near-linear sizes while incurring only a constant additive distance error?

The 20 years that followed witnessed a high throughput of results on additive spanners (see the recent Encyclopedia of Algorithms entry by Chechik \cite{ChechikSurvey}), and while this puzzle has been labelled
\begin{itemize}
\item ``major'' \cite{TZ06,CE06,Chechik13,ChechikSurvey,myspan,BV16}
\item ``main" \cite{BKMP05,BKMP10}
\item ``fascinating'' \cite{Woo06}
\item ``the chief open question in the field'' \cite{Pettie09}
\end{itemize}
and has been raised repeatedly, it has remained inexplicably wide open.

\begin{oq} \label{oq:main}
\emph{Prove or disprove: for all $\eps> 0$, there is a constant $k_{\eps}$ such that all graphs have a $+k_{\eps}$ additive spanner on $O(n^{1+\eps})$ edges.}
\end{oq}

Open Question \ref{oq:main} has been generally regarded with optimism.  This was motivated by some interesting new positive results: Chechik \cite{Chechik13} showed that all graphs have $+4$ spanners on $\Oish(n^{7/5})$ edges, and Baswana, Kavitha, Mehlhorn, and Pettie \cite{BKMP05,BKMP10} showed that all graphs have $+6$ spanners on $O(n^{4/3})$ edges (see \cite{Woo10,Knu14} for followup work).  This was also motivated by a lack of negative results: the Girth Conjecture implies an $\Omega(n^{1+1/k})$ lower bound for $+(2k-2)$ spanners, and the only further progress came from Woodruff \cite{Woo06}, who proved this lower bound unconditionally.
These bounds are fully compatible with a positive resolution of Open Question \ref{oq:main}.
Perhaps the most compelling argument for optimism, though, is the intuitive one: how could it be that spanners enjoy improved sparsity from a $+2$, $+4$, or $+6$ error allowance, but then suddenly the trail goes cold and no further tradeoffs are possible?
Despite all the evidence, the size upper bounds have not improved beyond $n^{4/3}$ for the last ten years, with no hint as to why.
%Despite all this evidence, upper bounds have mysteriously halted at the $n^{4/3}$ threshold for the last ten years, with no hint as to why this barrier should be at all difficult to surpass.

Meanwhile, considerable research effort has been spent on several promising directions of attack, with the explicit stated goal of making progress on this problem. 
Elkin and Peleg \cite{EP04} showed that there are near-linear size spanners with constant additive error in addition to a $(1+\eps)$ multiplicative error (see also \cite{Elkin05,TZ06,BKMP10} for more work on these \emph{mixed spanners}). 
Interesting spanners with non-constant additive error have been discovered \cite{BCE03,TZ06,Pettie09,BKMP10,Chechik13,myspan,BV16}.
Another example is the study of \emph{pairwise distance preservers} and \emph{pairwise spanners}, where we only require that distances between a small set of node pairs be (approximately) preserved \cite{BCE03,CE06,Pettie09,CGK13,KV13,Parter14,Kavitha15,BV16,AB16,B17}.
A lot of progress has also been done on additive spanners for special graph classes, see \cite{CDY05,DFG11,D+12,C+12,DA14} and the references therein.
While these fields have steadily progressed, none have proven sufficient to explain or beat the $n^{4/3}$ bound, and so the puzzle of Open Question \ref{oq:main} has endured.

%In a survey article on additive spanners by Chechik, the only other open question besides {oq:main} is: ``proving or disproving the existence of a spanner of size $O(n^{4/3-\eps})$ for some constant $\eps$ with constant or even polylog additive stretch would be a major breakthrough."

\subsection{Our Results}

Our first result is a negative resolution of Open Question \ref{oq:main}, and a negative resolution of the even stronger question of the tightness of the $n^{4/3}$ threshold.  We prove:
\begin{theorem}
\label{thm:main}
For all $\eps>0$, there is a $\delta>0$ and an infinite family of $n$ node graphs $G = (V, E)$ such that for any subgraph $H = (V, E')$ with $|E'| = O(n^{4/3-\eps})$, there exist nodes $u, v \in V$ with
$$\dist_H(u, v) = \dist_G(u, v) +\Omega(n^{\delta}).$$
\end{theorem}

Thus, the edge sparsity of $n^{4/3}$ cannot be improved, even for additive error $n^{o(1)}$,
%the $n^{4/3}$ spanner bound cannot be broken even for any additive error that is {\em subpolynomial} in the graph size, 
and the $+6$ spanner of Baswana et al. on $O(n^{4/3})$ edges is essentially the sparsest possible constant error additive spanner.
%construction until one considers the relaxation of additive spanners with $+n^c$ error.

%\paragraph{All hope is lost.}
Spanners are merely one (very appealing) form of graph compression. 
There are many other well-studied methods of compression which do not restrict the compressed form to be a subgraph of the original graph $G$; examples include Distance Oracles, Sketches, Labeling Schemes, Metric Embeddings, Emulators, etc.
For example, an \emph{emulator} is a sparse weighted graph $H$ (not necessarily a subgraph of $G$) that approximately preserves the distances of $G$ \cite{DHZ96,TZ06,Woo06,myspan}.

This leads to the following information theoretic question, which is an extremely relaxed version of Open Question \ref{oq:main}: 
Does there exist an algorithm for compressing a metric defined by an unweighted undirected graph into $O(n^{4/3-\eps})$ bits, for some $\eps>0$, such that approximate distances can be recovered from the compressed bitstring within constant additive error?
We further generalize Theorem \ref{thm:main} to show that our graph family is strongly {\em incompressible}, and provide a negative answer to this question as well.

\begin{theorem}
\label{thm:incompress}
For all $\eps > 0$, there exists a $\delta > 0$ such that there is no mapping $\psi$ from graphs $G$ on $n$ nodes to bitstrings of length $O(n^{4/3 - \eps})$ such that the distances of $G$ can always be recovered from the bitstring $\psi(G)$ within $+n^{\delta}$ error.
\end{theorem}
%In fact, we generalize this theorem even further to show that it holds even if only the distances within a node subset of size $n^{2/3 - \eps}$ are considered.

An interesting consequence of Theorem \ref{thm:incompress} is that the 20-year-old $+4$ additive emulator on $O(n^{4/3})$ edges of Dor, Halperin, and Zwick \cite{DHZ96} cannot be improved in the exponent even with any $n^{o(1)}$ additive error.

Finally, our resolution of Open Question \ref{oq:main} shows that polynomial additive distance error must be suffered in order to obtain near-linear size compression of graphs.
Given this, it is natural to wonder \emph{how much} polynomial error is necessary to obtain compression in this regime.
We show:
\begin{theorem}
For all $\eps > 0$, there exists a $\delta > 0$ such that there is no mapping $\psi$ from graphs $G$ on $n$ nodes to bitstrings of length $O(n^{1 + \delta})$ such that the distances of $G$ can always be recovered from the bitstring $\psi(G)$ within $+n^{1/22 - \eps}$ error.
\end{theorem}
The current best method for compression into $n^{1 + o(1)}$ bits obtains $+\Oish(n^{1/3})$ error \cite{myspan}.
We leave it as an open question to close this gap.

\section{The Construction}

The goal of this section is to explicitly construct the graphs mentioned in our theorems.
First, we offer a technical overview that gives the intuition behind our construction, which will help de-mystify some steps in our proof.
We will also highlight the novelty of our approach over previous lower bound constructions.

\subsection{Technical Overview}

Suppose we seek a lower bound against $+k$ spanners (think of $k$ as a small polynomial, like $n^{0.01}$ for now).  Our general approach is to start with a graph $G$ with the following special property: there is a set of node pairs $P$, with each pair at distance $k$, such that there is a unique shortest path in $G$ between each pair and the edge set of $G$ is precisely the union of these paths.
Thus, any edge deletion from $G$ will stretch one of the pairwise distances in $P$ by at least $+1$.
We then perform some transformations to $G$ to amplify this error.
The main trick here is to pick a family of ``obstacle graphs'' $H$, and then perform a new kind of replacement product $G \op H$, which we call the Obstacle Product (OP).
The effect will be that any spanner that deletes too many edges from $G \op H$ must increase the distance between one of the (transformed) node pairs in $P$ by at least $+k$.

Think of $G$ as the \emph{outer graph} and of $H$ as the \emph{inner graphs}.
The Obstacle Product $G \op H$ consists of two steps:
\begin{enumerate}
\item ({\em Edge Extension}) Replace every edge in $G$ with a length $k$ path
\item ({\em Inner Graph Replacement}) Replace each {\em original} node (i.e. not nodes created by the edge extension step) $v \in V(G)$ with an appropriately-chosen subgraph $H_v$ from the family of inner graphs.
\end{enumerate}

Finally, for each edge $\{u, v\}$ incident on a node $v$, one must choose exactly one node in $H_v$ that will serve as the new endpoint of this edge after the inner graph replacement step.  We will temporarily skip over the technical detail of how to choose this node.  The key outcome of the OP is that the (transformed) shortest paths for the node pairs in $P$ must now wander through many ``obstacles'' $H_v$ before reaching their final destination.

Our next step is to use a counting argument to say that for a sparse enough subgraph $F \subseteq G \op H$, there exists a (tranformed) pair in $\{s, t\} \in P$ such that $F$ is missing at least one of the edges used by the shortest path $\rho_{G \op H}(s, t)$ in {\em every} obstacle graph $H_v$ that this path visits.  We can then prove that
$$\dist_F(s, t) \ge \dist_{G \op H}(s, t) + k$$
Our proof of this claim uses two cases.  Possibly (1) the shortest paths $\rho_F(s, t)$ and $\rho_{G \op H}(s, t)$ follow the same path in the outer graph; that is, they intersect the same set of inner graphs in the same order.  In this case, because $F$ is missing an edge used by $\rho_{G \op H}(s, t)$ in every inner graph, $\rho_F(s, t)$ must take a $+1$ length detour at every inner graph, for a total error of $+k$.  Alternatively, (2) $\rho_F(s, t)$ and $\rho_{G \op H}(s, t)$ follow different paths in the outer graph; that is, $\rho_F(s, t)$ intersects a new inner graph.  In this case, we argue that $\rho_F(s, t)$ must travel an extra $k$-length path in the outer graph to reach its final destination, and this is the source of its $+k$ error.  See Figures \ref{fig:overview} and \ref{fig:overviewa} for a depiction of these two cases.

\begin{figure*}[t]
\begin{center}
%\begin{framed}
\begin{tikzpicture}[scale=0.6]

%overbrace
\draw [thick, decorate, decoration={brace}] (-6, 1.5) -- (6, 1.5);
\node [above=0.2cm] at (0, 1.5) {$\mathbf{k}$};

%connection
\draw [smooth, black, thick] plot [smooth] coordinates {(-7.5, 0) (-6, 0.5) (-4, -0.5) (-1, 0)};
\draw [smooth, black, thick] plot [smooth] coordinates {(1, 0) (4, 0.5) (6, -0.5) (7.5, 0)};

%inners
\draw [ultra thick, fill=white] (0, 0) circle [radius=1cm];
\draw [ultra thick, fill=white] (-5, 0) circle [radius=1cm];
\draw [ultra thick, fill=white] (5, 0) circle [radius=1cm];
\node [below=0.7cm] at (-5, 0) {$\mathbf{H_a}$};
\node [below=0.7cm] at (0, 0) {$\mathbf{H_b}$};
\node [below=0.7cm] at (5, 0) {$\mathbf{H_c}$};

%terms
\draw [fill=black, right=0.1cm] (-6, 0.5) circle [radius=0.15cm];
\draw [fill=black, left=0.1cm] (-4, -0.5) circle [radius=0.15cm];
\draw [fill=black] (-1, 0) circle [radius=0.15cm];
\draw [fill=black] (1, 0) circle [radius=0.15cm];
\draw [fill=black, right=0.1cm] (4, 0.5) circle [radius=0.15cm];
\draw [fill=black, left=0.1cm] (6, -0.5) circle [radius=0.15cm];

%start/end
\draw [fill=black] (-7.5, 0) circle [radius=0.15cm];
\node [above=0.2cm] at (-7.5, 0) {$\mathbf{s}$};
\draw [fill=black] (7.5, 0) circle [radius=0.15cm];
\node [above=0.2cm] at (7.5, 0) {$\mathbf{t}$};

%inner paths
\draw [dotted, ultra thick] (-6, 0.5) -- (-4, -0.5);
\draw [dotted, ultra thick] (-1, 0) -- (1, 0);
\draw [dotted, ultra thick] (4, 0.5) -- (6, -0.5);

\node [ultra thick, rotate=-30, above] at (-5, 0) {$\mathbf{+1}$};
\node [ultra thick, rotate=0, above] at (0, 0) {$\mathbf{+1}$};
\node [ultra thick, rotate=-30, above] at (5, 0) {$\mathbf{+1}$};
\end{tikzpicture}
\caption{\textbf{ \label{fig:overview}Case 1: Perhaps the new shortest path $\rho_F(s, t)$ still passes through the same set of inner graphs as $\rho_{G \op H}(s, t)$.  In this case, we use a counting argument to show that (for some pair $(s, t)$) the subgraph $F$ is missing an edge used by $\rho_{G \op H}(s, t)$ in every inner graph that it touches.  Therefore, $\rho_F(s, t)$ must take a $+1$ edge detour in every inner graph it touches, for a total of $+k$ error.}}
\end{center}
\end{figure*}
\vspace{1mm}
\begin{figure*}
\begin{center}
\begin{tikzpicture}[scale=0.6]
%connection
\draw [smooth, black, ultra thick, dashed] plot [smooth] coordinates {(-7.5, 0) (-6, 0.5) (-4, -0.5) (-1, 0)};
\draw [smooth, black, ultra thick, dashed] plot [smooth] coordinates {(1, 0) (4, 0.5) (6, -0.5) (7.5, 0)};

%inners
\draw [ultra thick, fill=white] (0, 0) circle [radius=1cm];
\draw [ultra thick, fill=white] (-5, 0) circle [radius=1cm];
\draw [ultra thick, fill=white] (5, 0) circle [radius=1cm];
\node at (-5, 0) {$\mathbf{H_a}$};
\node at (0, 0) {$\mathbf{H_b}$};
\node at (5, 0) {$\mathbf{H_c}$};

%terms
\draw [fill=black, right=0.1cm] (-6, 0.5) circle [radius=0.15cm];
\draw [fill=black, left=0.1cm] (-4, -0.5) circle [radius=0.15cm];
\draw [fill=black] (-1, 0) circle [radius=0.15cm];
\draw [fill=black] (1, 0) circle [radius=0.15cm];
\draw [fill=black, right=0.1cm] (4, 0.5) circle [radius=0.15cm];
\draw [fill=black, left=0.1cm] (6, -0.5) circle [radius=0.15cm];

%start/end
\draw [fill=black] (-7.5, 0) circle [radius=0.15cm];
\node [above=0.2cm] at (-7.5, 0) {$\mathbf{s}$};
\draw [fill=black] (7.5, 0) circle [radius=0.15cm];
\node [above=0.2cm] at (7.5, 0) {$\mathbf{t}$};

%new path
\draw [->, ultra thick] (-7.5, 0) -- (-7.5, -3);
\draw [thick] (-7.5, -3) to[bend right] (7.5, 0);
\node [right, ultra thick] at (-7.5, -1.5) {$\mathbf{k}$};
\node [ultra thick] at (0, -3.3) {$(+0)$};
\end{tikzpicture}
\caption{\textbf{ \label{fig:overviewa} Case 2: Perhaps the new shortest path $\rho_F(s, t)$ passes through an inner graph that is {\em not} touched by the old shortest path $\rho_{G \op H}(s, t)$.  In this case, we argue that $\rho_F(s, t)$ must travel an extra $k$-length path to reach its destination, and this is the source of its $+k$ error.}}
\end{center}

\end{figure*}

We consider the abstraction of this approach to be the biggest leap in understanding provided by our work.
Several important prior constructions can be viewed within this framework: the previous lower bound constructions of Woodruff \cite{Woo06} and Parter \cite{Parter14} can both be viewed within the obstacle product framework, with the inner graph as a biclique and the outer graph as a certain type of ``layered clique.''  A previous lower bound construction by the authors \cite{AB16} used the same ``layered clique'' outer graph, but allowed for various inner graphs to be substituted in.  The construction in this paper is the first one that allows for the modular substitution of outer graphs.
The ability to choose both inner and outer graphs turns out to be quite powerful: in addition to its central role in this paper, this type of product has proved useful in followup work constructing lower bound graphs for various other problems related to sketching graph distances \cite{B17,ABP17}.

In this paper, flexibility in our choice of outer graph turns out to be powerful enough that we are able to prove our tight spanner lower bounds {\em while only ever using cliques for our inner graphs}.  With this in mind, it will simplify our paper greatly to proceed with the restriction that $H$ is a family of cliques.
Let us now return to the main construction.  With the simplification that all inner graphs are cliques, we will execute the second step of the obstacle product (henceforth, the {\em clique replacement} step) as follows: replace each node $v$ with a clique on $\deg(v)$ nodes, and connect each of the $\deg(v)$ edges entering $v$ to a different clique node.  Following the argument above, we now have that every subgraph $F$ must keep at least one ``clique edge'' per pair in $P$, or else it stretches the distance between one of these paths by $+k$.  {\em Assuming that no two pairs $p_1, p_2 \in P$ ever use the same clique edge}, this implies that any $+(k-1)$ additive spanner of our graph $G \op H$ must keep at least $|P|$ edges in total.  The clique edge used by a (transformed) pair $p \in P$ in a clique $H_v$ is determined by the 2-path that the (original) path $\rho_G(p)$ uses to enter and leave the node $v$.  Thus, our lower bound of $|P|$ on the spanner density is realized so long as the shortest paths for every pair of pairs in $P$ is originally 2-path disjoint; i.e. $\rho_G(p_1) \cap \rho_G(p_2)$ does not contain any 2-paths in $G$ (see Figure \ref{fig:overview2}).

\begin{figure*}[t]
\begin{center}
%\begin{framed}
\begin{tikzpicture}

%lines
%red
\draw [ultra thick] plot [smooth] coordinates {({2*sin(360*0/3)}, {2*cos(360*0/3)}) (0, 0) ({2*sin(360*1/3)}, {2*cos(360*1/3)})};

%blue
\draw [ultra thick, dashed] plot [smooth] coordinates {({2*sin(360*1/3)}, {2*cos(360*1/3)}) (0, 0) ({2*sin(360*2/3)}, {2*cos(360*2/3)})};

%green
\draw [ultra thick, dotted] plot [smooth] coordinates {({2*sin(360*2/3)}, {2*cos(360*2/3)}) (0, 0) ({2*sin(360*0/3)}, {2*cos(360*0/3)})};

%v
\draw [fill=black] (0, 0) circle [radius=0.15cm];
\node at (0.3, 0.2) {$\mathbf{v}$};

%12gon
\foreach \t in {0,...,2}{
	\draw [fill=black] ({2*sin(360*\t/3)}, {2*cos(360*\t/3)}) circle [radius=0.15cm];
}

\draw [->, ultra thick] (3, 0) -- (5, 0);

\begin{scope}[shift={(8, 0)}]

%lines
%red
\draw [ultra thick] plot [smooth] coordinates {({2*sin(360*0/12)}, {2*cos(360*0/12)}) ({(2 + 1/sqrt(2))/2*sin(360*(-1)/24)},{(2 + 1/sqrt(2))/2*cos(360*(-1)/24)}) ({1/sqrt(2)*sin(360*0/3)}, {1/sqrt(2)*cos(360*0/3)}) ({1/sqrt(2)*sin(360*1/3)}, {1/sqrt(2)*cos(360*1/3)}) ({(2 + 1/sqrt(2))/2*sin(360*(9)/24)},{(2 + 1/sqrt(2))/2*cos(360*(9)/24)}) ({2*sin(360*2/6)}, {2*cos(360*2/6)})};

%blue
\draw [ultra thick, dashed] plot [smooth] coordinates {({2*sin(360*4/6)}, {2*cos(360*4/6)}) ({(2 + 1/sqrt(2))/2*sin(360*(17)/24)},{(2 + 1/sqrt(2))/2*cos(360*(17)/24)}) ({1/sqrt(2)*sin(360*2/3)}, {1/sqrt(2)*cos(360*2/3)}) ({1/sqrt(2)*sin(360*1/3)}, {1/sqrt(2)*cos(360*1/3)}) ({(2 + 1/sqrt(2))/2*sin(360*(7)/24)},{(2 + 1/sqrt(2))/2*cos(360*(7)/24)}) ({2*sin(360*2/6)}, {2*cos(360*2/6)})};

%green
\draw [ultra thick, dotted] plot [smooth] coordinates {({2*sin(360*4/6)}, {2*cos(360*4/6)}) ({(2 + 1/sqrt(2))/2*sin(360*(15)/24)},{(2 + 1/sqrt(2))/2*cos(360*(15)/24)}) ({1/sqrt(2)*sin(360*2/3)}, {1/sqrt(2)*cos(360*2/3)}) ({1/sqrt(2)*sin(360*0/3)}, {1/sqrt(2)*cos(360*0/3)}) ({(2 + 1/sqrt(2))/2*sin(360*(1)/24)},{(2 + 1/sqrt(2))/2*cos(360*(1)/24)}) ({2*sin(360*0/6)}, {2*cos(360*0/6)})};

%v
\draw [fill=black] ({1/sqrt(2)*sin(360*0/3)}, {1/sqrt(2)*cos(360*0/3)}) circle [radius=0.15cm];
\draw [fill=black] ({1/sqrt(2)*sin(360*1/3)}, {1/sqrt(2)*cos(360*1/3)}) circle [radius=0.15cm];
\draw [fill=black] ({1/sqrt(2)*sin(360*2/3)}, {1/sqrt(2)*cos(360*2/3)}) circle [radius=0.15cm];
\draw [ultra thick](0, 0) circle [radius={1/sqrt(2)}];
\node at (0, 0) {$\mathbf{v}$};

%12gon
\foreach \t in {0,...,2}{
	\draw [fill=black] ({2*sin(360*\t/3)}, {2*cos(360*\t/3)}) circle [radius=0.15cm];
}
\end{scope}

\end{tikzpicture}
%\end{framed}
\end{center}
\caption{\textbf{\label{fig:overview2} If no two paths enter and leave $v$ in the same way, then after $v$ is replaced with a clique, all of these paths will use a different clique edge.}}
\end{figure*}

We now have a lower bound on the number of edges that the spanner must keep; our next step is to obtain a favorable upper bound on the number of nodes in the spanner.  The dominant cost here is from the edge extension step, and so the number of nodes in the spanner is roughly equal to $E(G) \cdot k$.  We are now able to state the properties that we want $G$ to have:
\begin{enumerate}
\item $G$ is the union of many shortest paths between a set of node pairs $P$, with the following properties:
\begin{itemize}
\item Each pair $p \in P$ has distance $k$
\item Each pair $p \in P$ has a unique shortest path between its endpoints
\item These shortest paths are $2$-path disjoint
\end{itemize}
Note that we want $P$ to be as large as possible, since we will ultimately obtain a lower bound of $|P|$ on the number of edges in any $+(k-1)$ spanner.

\item $G$ has as few edges as possible, because the number of nodes in the spanner (due to the Edge Extension step) will be $k \cdot |E(G)|$.
\end{enumerate}

This completes the technical overview.  In the next two subsections, we will describe how to obtain starting graphs $G$ with these properties in full technical detail.  After that, we will repeat the details of the obstacle product transformation more formally, and we will fully prove that this series of transformation has the claimed properties.

\subsection{Starting Point}

Our starting point is the following lemma:

\begin{lemma} \label{celb} For all $\eps > 0$, there is a $0 < \delta < \eps$, and an infinite family of $n$ node graphs $G = (V, E)$ and pair sets $P \subseteq V \times V$ with the following properties:
\begin{enumerate}
\item For each pair in $P$, there is a unique shortest path between its endpoints
\item These paths are edge disjoint
\item For all $\{s, t\} \in P$, we have $\dist_G(s, t) = \Delta$ for some value $\Delta = \Theta(n^{\delta})$
\item $|P| = \Theta(n^{2 - \eps})$
\end{enumerate}
\end{lemma}

Alon \cite{Alon01} constructed graphs that prove this lemma in his work on property testing. To make this paper self-contained, we have included a full proof in the appendix.  The lemma can also be shown by a slight modification of the work of Coppersmith and Elkin \cite{CE06} on distance preservers, who proved a version of this lemma with all properties except the third.  

An additional property of the graphs from this lemma, implied by the others, is that $|E| = \Omega(n^{2 - \eps + \delta})$.  This is {\em too dense} for our purposes; for technical reasons discussed in the overview, we need a sparser object to prove interesting results.  The subject of the next lemma is to modify these graphs to reduce the edge count.

\subsection{Path Packing}

Our next move is to prove the following modification of Lemma \ref{celb}:
\begin{lemma} \label{compressedlb}
For all $\eps > 0$, there is a $0 < \delta < \eps$, and an infinite family of graphs $G=(V,E)$ and pair sets $P \subseteq V \times V$ with the following properties:
\begin{enumerate}
\item For each pair in $P$, there is a unique shortest path between its endpoints
\item These paths are 2-path disjoint (i.e. no two paths share any pair of consecutive edges)
\item For all $\{s, t\} \in P$, we have $\dist_G(s, t) = \Delta$ for some value $\Delta = \Theta(n^{\delta})$
\item $|P| = \Theta(n^{2 - \eps})$
\item $|E| = O(n^{3/2})$
\end{enumerate}
\end{lemma}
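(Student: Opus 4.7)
The plan is to take the graph $G_0$ produced by Lemma~\ref{celb} and apply an edge subdivision: replace each edge of $G_0$ by a path of length $t$, where $t$ is selected as a (polynomial) function of $n_0$ and the target $\eps$. Call the resulting graph $G_1$ and transport the pair set $P_0$ to the obvious pair set $P_1 \subseteq V(G_1)$ consisting of the original endpoints.

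\textbf{2-path disjointness from edge disjointness.} The structural point of subdivision is precisely this conversion. Suppose, for contradiction, that two paths $\rho_1, \rho_2$ in $G_1$ share a 2-path $(a, b, c)$. Then either (i) the edges $(a,b)$ and $(b,c)$ lie inside the subdivision of a single edge of $G_0$, in which case both underlying paths in $G_0$ traverse that edge, or (ii) the 2-path straddles an original vertex of $G_0$ (i.e.\ $b$ is an original vertex of $G_0$), in which case the underlying paths in $G_0$ share the two $G_0$-edges meeting at $b$. Either scenario contradicts the edge disjointness guaranteed by Lemma~\ref{celb}. Uniqueness of shortest paths transfers directly from $G_0$ to $G_1$, since any shortest path in $G_1$ between original endpoints restricts to a shortest path in $G_0$.

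\textbf{Parameter bookkeeping.} Writing $n_1 = n_0 + t m_0$, $m_1 = (t+1) m_0$, $\Delta_1 = (t+1)\Delta_0$, and $|P_1| = |P_0|$, the edge count property is easy: once $t \gtrsim n_0/m_0$ we have $n_1 \sim t m_0 \sim m_1$, so $|E(G_1)| = \Theta(n_1)$, which is safely inside $O(n_1^{3/2})$. The remaining task is to choose $\eps_0, \delta_0, t$ so that $|P_1| = \Theta(n_1^{2-\eps})$ and $\Delta_1 = \Theta(n_1^{\delta})$ with $\delta < \eps$; this reduces to a linear system in the exponents $\log n_1/\log n_0 = \tau + 2 - \eps_0 + \delta_0$ (where $t = n_0^\tau$), $(2-\eps_0)/\log_{n_0} n_1 = 2-\eps$, and $(\tau + \delta_0)/\log_{n_0} n_1 = \delta$.

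\textbf{The main obstacle.} Solving this system while respecting the constraint $\tau \geq 0$ and the Lemma~\ref{celb} constraint $\delta_0 < \eps_0$ is the technical core of the argument. Pure edge subdivision tends to push the effective $|P|$-exponent toward $n^1$, so to achieve the full range of $\eps$ in the statement (in particular very small $\eps$) the proof may need to combine subdivision with an additional preparation step — for example a vertex-splitting step that boosts $|P_1|$ relative to $n_1$ before subdivision, or a careful choice of Lemma~\ref{celb} parameters along a subfamily where $\delta_0$ is arbitrarily small relative to $\eps_0$ — to land in the required regime simultaneously for every target $\eps > 0$.
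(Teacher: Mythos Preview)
Your subdivision idea correctly converts edge disjointness to 2-path disjointness (in fact it preserves edge disjointness outright, since every edge of $G_1$ lies inside a unique subdivided $G_0$-edge), and uniqueness of shortest paths transfers as you say. The gap is entirely in the parameter bookkeeping, and it is fatal rather than merely technical.

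Because the $|P_0|$ shortest paths in $G_0$ are edge disjoint and each has length $\Delta_0$, we have $m_0 \ge |P_0|\cdot \Delta_0$. Subdividing with any $t \ge 2$ gives
\[
n_1 \;\ge\; (t-1)\,m_0 \;\ge\; (t-1)\,|P_0|\,\Delta_0 \;>\; |P_0| \;=\; |P_1|,
\]
so $|P_1| < n_1$ and hence $|P_1| = \Theta(n_1^{2-\eps})$ forces $\eps > 1$. In other words, subdivision always drives the pair-set exponent \emph{below} $1$, not merely ``toward $n^1$'' as you write. The only way to avoid this is $t=1$ (no subdivision), but then property~5 requires $2 - \eps_0 + \delta_0 \le 3/2$, i.e.\ $\eps_0 > 1/2$. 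Either way you cannot reach any target $\eps$ below $1/2$, and the lemma is claimed for \emph{all} $\eps > 0$; the downstream application to Theorem~\ref{splb} needs precisely the small-$\eps$ regime. Your closing paragraph recognizes the obstacle but does not resolve it: ``vertex-splitting'' and ``careful choice of $\delta_0$'' are not specified, and no local modification of this kind will raise $|P|$ fast enough relative to $n$.

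The paper takes a genuinely different route that avoids this trap. It forms a product of two copies of $G_0$: nodes are triples $(u_1,u_2,i)$ with $i\in\{1,2\}$, edges alternate a forward step in the first coordinate with a forward step in the second, and new pairs are ordered pairs of old pairs. This squares both $n$ and $|P|$ simultaneously (so $|P'| = \Theta(N^{2-\eps})$ with the \emph{same} $\eps$), while each node has degree at most $2n$, giving $|E'| = O(n^3) = O(N^{3/2})$. The 2-path disjointness then comes from reading off the $G_1$-edge and the $G_2$-edge from any length-2 subpath and invoking edge disjointness in each factor. The essential idea you are missing is an operation that grows $|P|$ in step with $n$; subdivision grows only $n$.
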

There are two important differences between Lemmas \ref{celb} and \ref{compressedlb}: the paths have become $2$-path disjoint rather than edge disjoint, and the edge count has fallen from $\approx n^2$ to $\approx n^{3/2}$.  For this reason, we think of this step as {\em path packing}, as we are essentially packing the same number of shortest paths into many fewer edges while only slightly relaxing the paths' overlap properties.

We will prove this lemma by starting with a graph as described in Lemma \ref{celb}, and performing a transformation to this graph to give it the properties of Lemma \ref{compressedlb}.  With this in mind, one should {\em not} think of the parameters $n, \eps, \delta$ in Lemma \ref{compressedlb} as the same as those in Lemma \ref{celb}; despite having the same names, they will change across the transformation.

We will now describe our transformation at a high level.  Start with a graph $G=(V,E)$ and pair set $P \subseteq V \times V$ as described in Lemma \ref{celb}.  Our goal is to create a new graph $G' = (V', E')$ and pair set $P' \subseteq V' \times V'$ as described in Lemma \ref{compressedlb}.  We achieve this by defining $G'$ to be a product of $G_1$ and $G_2$ (which are two identical copies of $G$), and we define $P'$ to be a product of $P_1$ and $P_2$ (which are two identical copies of $P$, although $P_1$ contains node pairs from $G_1$ and $P_2$ contains node pairs from $G_2$).  Intuitively, each new pair $p \in P'$ corresponds to a pair of pairs $(p_1 \in P_1, p_2 \in P_2)$, and a walk between the endpoints of $p$ in $G'$ corresponds to a walk between the endpoints of $p_1$ in $G_1$ \emph{and} simultaneously a walk between the endpoints of $p_2$ in $G_2$.
To enable this, each node in $G'$ corresponds to a pair of nodes $(u \in V_{G_1}, v \in V_{G_2})$, and we carefully design the new edge set $E'$ such that an edge in $E'$ corresponds to a step in $G_1$ {\em or} a step in $G_2$, but not both (e.g. exactly one of the two indices will remain the same across any edge in $E'$).
The shortest paths for $P'$ will be forced to take alternating steps in $G_1$ and $G_2$.

From this, we can argue $2$-path disjointness of the shortest paths for $P'$ as follows.  Consider a length $2$ subpath of the shortest path for some pair $p' \in P'$.  These two edges correspond to a step in $G_1$ and a step in $G_2$; since shortest paths for $P_1$ in $G_1$ are completely edge disjoint (and the same for $P_2, G_2$), these edges {\em uniquely} determine pairs $p_1 \in P_1$ and $p_2 \in P_2$.  Therefore, this information is sufficient to uniquely determine the new pair $p' = (p_1, p_2)$ being considered.

We will now describe this construction in full detail.
\begin{itemize}
\item \emph{The nodes:} For each (possibly non-distinct) pair of nodes $u_1,u_2 \in V$ and index $i \in \{1, 2\}$ we add the triple $(u_1,u_2,i)$ as a new node to $G'$. That is, the node set of our new graph $G'$ will be defined as: 
$$V' = \left\{ (u_1,u_2,i) \ \mid \ u_1,u_2 \in V, \ i \in \{1, 2\} \right\}. $$
Semantically, the index $i$ dictates whether we are supposed to take a step in the graph $G_1$ (represented by the first coordinate) or the graph $G_2$ (represented by the second coordinate).

\item \emph{The edges:} To define the edges of $G'$ we first need to define a \emph{forwards direction} for each edge in $E$. 
These directions will be chosen so that the shortest path between any pair $\{s,t\} \in P$ can be thought of as a walk that only traverses edges in their forwards direction. 
We do this by fixing an arbitrary ordering $s \leadsto t$ for every pair $\{s,t\} \in P$, and then directing the edges of the unique shortest path between this pair from $s$ towards $t$.  In other words, for any edge $\{u,v\}$ on the path, we define the $u \to v$ direction to be {\em forward} iff $v$ is closer to $t$ than it is to $s$. 
Let $\overset{\to}{P}$ be the version of $P$ with an arbitrary ordering imposed on each pair, and
let $\overset{\to}{E}$ be the version of $E$ with an (ordered) edge $(u,v) \in E$ corresponding to each (unordered) edge $\{u,v\} \in E$ if $u \to v$ was defined as the forwards direction. 
A crucial observation here is that the forwards direction is well-defined and consistent for all the edges in $E$, due to the property that every edge in $E$ is on the shortest path for exactly one pair in $P$.

\qquad We are now ready to define the edge set $E'$ of $G'$. 
The edges will be defined differently depending on the index value $i$ of the node. 
For every node $x=(u_1, u_2, 1) \in V'$, we add an edge $\{x,y\}$ to $G'$ if and only if $y=(u'_1,u_2,2)$ and $(u_1, u'_1) \in \overset{\to}{E}$.
Additionally, for every node $x=(u_1,u_2,2) \in V'$, we add an edge $\{x,y\}$ to $G'$ if and only if $y=(u_1,u'_2,1)$ and $(u_2,u'_2) \in \overset{\to}{E}$.
Intuitively, this enforces that any interesting shortest path must alternate between taking a forwards step in the first coordinate, and then a forwards step in the second coordinate.
More formally, the edge set of our new graph is:
\begin{align*}
E' := &\left\{ \{( u_1 ,u_2 ,1), (u'_1,u_2,2) \} \mid (u_1,u'_1) \in \overset{\to}{E}, u_2 \in V \right\} \bigcup\\
&\left\{ \{( u_1 ,u_2, 2), (u_1,u'_2,1) \} \mid (u_2,u'_2) \in \overset{\to}{E}, u_1 \in V \right\}\\
\end{align*} 
%$$E' := \left\{ \{( u_1 ,u_2 ,1), (u'_1,u_2,2) \} \mid (u_1,u'_1) \in \overset{\to}{E}, u_2 \in V \right\} \bigcup \left\{ \{( u_1 ,u_2, 2), (u_1,u'_2,1) \} \mid (u_2,u'_2) \in \overset{\to}{E}, u_1 \in V \right\}$$

\item \emph{The pair set:} Our new pair set $P'$ will be composed of pairs of old pairs from $P$, and its definition will rely on the ordering of each pair that we fixed above. For each ordered, possibly non-distinct pair of pairs $(s_1, t_1), (s_2, t_2) \in \overset{\to}{P}$, we add the pair $\{(s_1, s_2, 1), (t_1, t_2, 1) \}$ to $P'$.
Intuitively, given pairs $p_1, p_2 \in \overset{\to}{P}$, the corresponding new pair dictates that $p_1$ must be traveled in $G_1$ and $p_2$ must be traveled in $G_2$.
The formal definition of $P'$ is:
%\begin{align*}
%&P' :=\\
%&\left\{ \{(s_1, s_2, 1), (t_1, t_2, 1) \} \mid (s_1, t_1), (s_2, t_2) \in \overset{\to}{P} \ \right\}
%\end{align*}
$$P' := \left\{ \{(s_1, s_2, 1), (t_1, t_2, 1) \} \mid (s_1, t_1), (s_2, t_2) \in \overset{\to}{P} \ \right\}$$
\end{itemize}

In order to prove Lemma \ref{compressedlb}, we will first define a path for each pair in $P'$ that alternates between taking a step in $G_1$ and a step in $G_2$ as described above.

\begin{definition} [The path $\rho$]
We define
%\begin{align*}
%\rho(\{(s_1, s_2, 1), (t_1, t_2, 1)\}) :=&\\
%(u_1=(s_1, s_2, 1), u_2, \dots, &u_{k-1}, u_k=(t_1, t_2, 1))
%\end{align*}
$$\rho(\{(s_1, s_2, 1), (t_1, t_2, 1)\}) := (u_1=(s_1, s_2, 1), u_2, \dots, u_{k-1}, u_k=(t_1, t_2, 1))$$
where
\begin{enumerate}
\item $u_i$ for odd $i$ is equal to $(v^1_{\lceil i/2 \rceil}, v^2_{\lceil i/2 \rceil}, 1)$, and
\item $u_i$ for even $i$ is equal to $(v^1_{i/2 + 1}, v^2_{i/2}, 2)$
\end{enumerate}
where $v^1_j$ is the $j^{th}$ node on the unique shortest path between $s_1$ and $t_1$ in $G$, and $v^2_j$ is the $j^{th}$ node on the unique shortest path between $s_2$ and $t_2$ in $G$.
\end{definition}
In other words, $\rho(p')$ is built iteratively by alternatingly changing the first coordinate along the unique shortest path from $s_1$ to $t_1$ in $G$, and then changing the second coordinate along the unique shortest path from $s_2$ to $t_2$ in $G$.  
To see that the $\rho(p')$ is well defined for all $p' \in P'$ (i.e. the path $\rho(p')$ exists in $G'$), note that by construction an edge from $(v^1_{\lceil i/2 \rceil}, v^2_{\lceil i/2 \rceil}, 1)$ to $(v^1_{\lceil i/2 \rceil + 1}, v^2_{\lceil i/2 \rceil}, 2)$ (for odd $i$) exists iff the edge $(v^1_{\lceil i/2 \rceil}, v^1_{\lceil i/2 \rceil + 1})$ is in $\overset{\to}{E}$, which follows from the fact that $(s_1, t_1) \in \overset{\to}{P}$.  The other case, where $i$ is even, follows from an entirely symmetric argument.

Also note that the length of $\rho(p')$ is exactly $2 \Delta$ for all $p' \in P'$. 
Our next step is to argue that $\rho(p')$ is the unique shortest path between its endpoints.

\begin{claim}
For all pairs $p' \in P'$, the path $\rho(p')$ is the unique shortest path between the endpoints of $p'$.
\end{claim}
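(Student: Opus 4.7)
The strategy is to analyze the structure of any path in $G'$ between the endpoints of $p' = \{(s_1, s_2, 1), (t_1, t_2, 1)\}$ by projecting onto the two coordinates, and argue that both projections must be shortest paths in $G$, which then forces the path to coincide with $\rho(p')$.

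First, I would verify that $\rho(p')$ exists and has length exactly $2\Delta$, which follows immediately from the definition together with the fact that the unique shortest path between $s_j$ and $t_j$ in $G$ has $\Delta$ edges, each of which was directed forward. Next, I would observe a structural fact about walks in $G'$: by construction, every edge incident on a node with index $1$ goes to a node with index $2$ (and changes the first coordinate along a forward edge of $\overset{\to}{E}$), while every edge incident on a node with index $2$ goes to a node with index $1$ (and changes the second coordinate along a forward edge of $\overset{\to}{E}$). Consequently, any walk in $G'$ starting at index $1$ has a strictly alternating coordinate-change pattern: odd-indexed steps advance the first coordinate forward in $G$, and even-indexed steps advance the second coordinate forward in $G$.

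Now let $\pi$ be any walk in $G'$ from $(s_1,s_2,1)$ to $(t_1,t_2,1)$. Since $\pi$ starts and ends at index $1$, it has even length $2k$, and projecting onto the first and second coordinates yields walks in $G$ from $s_1$ to $t_1$ and from $s_2$ to $t_2$, respectively, each of length exactly $k$ and each using only edges of $\overset{\to}{E}$ in the forward direction. Since any walk in $G$ from $s_j$ to $t_j$ has length at least $\dist_G(s_j,t_j) = \Delta$, we get $k \geq \Delta$, hence $|\pi| \geq 2\Delta$. Equality forces each projected walk to have length exactly $\Delta$, i.e. to be a shortest path from $s_j$ to $t_j$ in $G$; by property $(1)$ of Lemma \ref{celb}, each such walk is therefore the unique shortest path between its endpoints.

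Finally, once both coordinate projections are pinned down as these unique shortest paths, the alternating structure from the previous paragraph completely determines $\pi$: the $i$-th step for odd $i$ must advance the first coordinate to the next node on the shortest path from $s_1$ to $t_1$, and the $i$-th step for even $i$ must advance the second coordinate to the next node on the shortest path from $s_2$ to $t_2$. This reconstructs $\rho(p')$ step by step, showing uniqueness. The only subtlety in this argument, and the one I would want to state most carefully, is the alternation lemma: that membership in $V'$ with a given index value rigidly determines which coordinate the next edge modifies; once that is pinned down, the rest is just combining the $+\Delta$ lower bound on each coordinate walk with the uniqueness of shortest paths in $G$.
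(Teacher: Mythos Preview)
Your overall strategy---project onto the two coordinates, bound each projection below by $\Delta$, and use uniqueness of shortest paths in $G$---matches the paper's approach. But your ``alternation lemma'' is false, and this is a genuine gap.

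You claim that every edge incident on an index-$1$ node changes the first coordinate forward, and every edge incident on an index-$2$ node changes the second coordinate forward. This cannot be right even on its face: the graph $G'$ is undirected, so the same edge is incident on both an index-$1$ node and an index-$2$ node, and each edge changes only one coordinate. Concretely, the second edge rule creates the undirected edge $\{(u_1,u_2,2),(u_1,u_2',1)\}$ with $(u_2,u_2')\in\overset{\to}{E}$; viewed from the index-$1$ endpoint $(u_1,u_2',1)$, traversing this edge moves the \emph{second} coordinate \emph{backward} to $u_2$. So from an index-$1$ node a walk may either advance the first coordinate forward \emph{or} step the second coordinate backward, and symmetrically at index $2$. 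There is no forced alternation of which coordinate moves.

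This breaks two places in your argument. First, you cannot conclude that the two coordinate projections each have length exactly $k$; only their \emph{sum} equals $|\pi|$. (This is easily repaired: $|\tau_1|+|\tau_2|=|\pi|$ and $|\tau_i|\ge\Delta$ still give $|\pi|\ge 2\Delta$, with equality forcing each $\tau_i$ to be the unique shortest $(s_i,t_i)$ path.) Second, and more seriously, your final uniqueness step---``the alternating structure completely determines $\pi$''---no longer follows, because even once both projections are pinned down as the unique shortest paths, you must still rule out different interleavings. The paper handles this with a short case analysis at the first point of divergence: if $\pi$ deviates by a first-coordinate step, that step goes to a node off the unique $(s_1,t_1)$ shortest path; if it deviates by a second-coordinate step, that step is necessarily backward, contradicting that $\tau_2$ is the forward-only shortest $(s_2,t_2)$ path. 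You need some argument of this sort to close the gap.
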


\begin{proof}
Let $p' =: (s', t')$ where $s' = (s_1,s_2,1)$ and $t'=(t_1,t_2,1)$.
By construction, any $s' \leadsto t'$ path $\pi := (s', x_1, \dots, x_k, t')$ corresponds to a $s_1 \leadsto t_1$ path $\pi_1$ in $G_1$ and a $s_2 \leadsto t_2$ path $\pi_2$ in $G_2$, where $\pi_1$ and $\pi_2$ are given by the sequence of values taken by the first and second indices of the nodes $x_i$ (respectively).
Moreover, we have $|\pi| = |\pi_1| + |\pi_2|$.

Note that the path $\rho(p')$ is the unique $s' \leadsto t'$ path $\pi$ that corresponds to the unique shortest $s_1 \leadsto t_1$ path in $G_1$ and the unique shortest $s_2 \leadsto t_2$ path in $G_2$.
Thus, any other $s' \leadsto t'$ path $\tau(p')$ from $s'$ to $t'$ corresponds to a non-shortest $s_1 \leadsto t_1$ or $s_2 \leadsto t_2$ path (without loss of generality, assume the former).
We thus have $|\rho(p')| = 2\Delta$, while $\tau(p') \ge (\Delta + 1) + \Delta > 2\Delta$.
Hence, $\rho(p')$ is the unique shortest $s' \leadsto t'$ path.
\end{proof}

Finally, we observe that, from our construction, the paths $\rho(p'),\rho(p'')$ are $2$-path disjoint for any distinct $p',p'' \in P'$.

\begin{claim}
For any two distinct pairs $p' \ne p'' \in P'$, the paths $\rho(p'), \rho(p'')$ are 2-path disjoint.
\end{claim}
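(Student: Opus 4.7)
The plan is to argue that any 2-path appearing in some $\rho(p')$ uniquely determines $p'$, so a 2-path cannot lie in two different paths from $\{\rho(p') : p' \in P'\}$.

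First I would observe that, by the construction of $\rho$, the nodes in $\rho(p')$ strictly alternate between index $1$ and index $2$. Consequently, any three consecutive nodes forming a 2-path have index pattern either $(1,2,1)$ or $(2,1,2)$. In the first case the 2-path has the form $(u_1,u_2,1) \to (u'_1,u_2,2) \to (u'_1,u'_2,1)$, and by the definition of $E'$ the first edge exists because $(u_1,u'_1) \in \overset{\to}{E}$ and the second because $(u_2,u'_2) \in \overset{\to}{E}$. The $(2,1,2)$ case is symmetric. Thus every 2-path in $\rho(p')$ exposes exactly one directed edge of $G_1$ lying on the shortest path for some pair in $\overset{\to}{P}$ and exactly one directed edge of $G_2$ lying on the shortest path for some pair in $\overset{\to}{P}$.

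Next I would invoke the edge-disjointness property guaranteed by Lemma \ref{celb}: each edge of $G$ lies on the unique shortest path of at most one pair in $P$. Therefore the $G_1$-edge revealed by the 2-path determines a unique pair $(s_1,t_1) \in \overset{\to}{P}$, and the $G_2$-edge determines a unique pair $(s_2,t_2) \in \overset{\to}{P}$. By the definition of $P'$, the pair $\{(s_1,s_2,1),(t_1,t_2,1)\}$ is then the unique element of $P'$ whose canonical path $\rho$ could possibly contain the given 2-path. Hence if $\rho(p')$ and $\rho(p'')$ shared a 2-path, we would conclude $p' = p''$, contradicting distinctness.

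I do not expect any serious obstacle here; the only subtlety is confirming that both orientation and endpoint information are recovered correctly, which amounts to checking the two index patterns $(1,2,1)$ and $(2,1,2)$ and verifying that the edge uncovered in each coordinate is genuinely a forward edge of some shortest path in $G$, so that the edge-disjointness property of Lemma \ref{celb} applies. Once that case analysis is written out, the uniqueness of $p'$ follows immediately and the claim is proved.
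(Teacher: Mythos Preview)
Your proposal is correct and follows essentially the same approach as the paper's proof: both identify that any 2-path in $\rho(p')$ has index pattern $(1,2,1)$ or $(2,1,2)$, extract from it one forward edge in each coordinate, and then use the edge-disjointness of shortest paths in $G$ (from Lemma~\ref{celb}) to recover the two constituent pairs in $\overset{\to}{P}$, hence $p'$ itself. The paper phrases this as ``assume a shared 2-path and deduce $p'=p''$,'' while you phrase it as ``a 2-path uniquely determines $p'$,'' but these are the same argument.
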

\begin{proof}
Consider two pairs $p' = \{(s_1',s_2',1), (t_1',t_2',1)\} \in P'$ and $p'' = \{(s_1'',s_2'',1), (t_1'',t_2'',1)\} \in P'$ for which there is a $2$-path $(a, b, c)$ that is a subpath of both $\rho(p')$ and $\rho(p'')$. We will show that $p'$ and $p''$ must be the same pair.

By definition of $\rho$, the subpath $(a, b, c)$ must have one of two forms:
either $a=(x_1,y_2,1), b=(x_1',y_2,2),c=(x_1',y_2',1)$, or $a=(x_1,y_2,2), b=(x_1,y_2',1), c=(x_1',y_2',2)$.
Assume we are in the first case, and the second case is symmetric.
Again by the definition of $\rho$, we must have that $(x_1,x_1') \in \overset{\to}{E}$ is on the unique $(s'_1,t'_1)$ shortest path \emph{and} on the unique $(s''_1,t''_1)$ shortest path in $G$.
Since the shortest paths between pairs in $P$ are edge disjoint in $G$, this implies that $(s_1',t_1') = (s''_1,t''_1)$.
Moreover, we have that $(y_2,y_2') \in \overset{\to}{E}$ is on the unique shortest paths for both $(s'_2,t'_2)$ and $(s''_2,t''_2)$, and so we also have that $(s_2',t_2') = (s''_2,t''_2)$.
Together, these imply that $p'=p''$. 
\end{proof}

We can now prove Lemma \ref{compressedlb}:
\begin{proof} [of Lemma \ref{compressedlb}]
It is immediate from the construction that $N := |V'| = \Theta(n^2)$, and that $|P'| = |P|^2 = \Theta(n^{4 - 2\eps}) = \Theta(N^{2 - \eps})$ (thus, the new value of $\eps$ is the {\em same} as the old value of $\eps$ used to create $G$ via Lemma \ref{celb}).  Additionally, a loose (but sufficient for our purposes) upper bound on $|E'|$ follows from the observation that each node can have at most $2n$ neighbors, and thus $|E'| = O(n^3) = O(N^{3/2})$.

We have demonstrated above that $\rho(p')$ is the unique shortest path for any pair $p' \in P'$, and that $\rho(p'_1)$ and $\rho(p'_2)$ are 2-path disjoint for any $p'_1 \ne p'_2 \in P'$.  We additionally have $|\rho(p')| = 2\Delta = \Theta(n^{\delta}) = \Theta(N^{\delta/2})$ (and so the new value of $\delta$ is half the old value of $\delta$ used to create $G$ via Lemma \ref{celb}).  Since the values of $\eps, \delta$ used to create $G$ satisfied $0 < \delta < \eps$, and our new value of $\eps$ has remained unchanged while our new value of $\delta$ has been halved, we then still have $0 < \delta < \eps$.
\end{proof}

\subsection{The Obstacle Product}

Our final move is to produce an {\em Obstacle Product} (OP) $G \op K$, where $G$ is a graph produced by Lemma \ref{compressedlb}, and $K$ is a clique.  For intuition on the obstacle product and why it is useful, we refer the reader to the overview at the beginning of this section.  Its ultimate purpose is to prove our main theorem, which we restate below for convenience:

\begin{theorem} \label{splb}
For all $\eps>0$, there is a $\delta>0$ and an infinite family of $n$ node graphs $G = (V, E)$ such that for any subgraph $H = (V, E')$ with $|E'| = O(n^{4/3-\eps})$, there exist nodes $u, v \in V$ with
$$\dist_H(u, v) = \dist_G(u, v) +\Omega(n^{\delta}).$$
\end{theorem}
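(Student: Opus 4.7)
The plan is to realize the Obstacle Product sketched in the technical overview, taking the graph $G_0$ of Lemma \ref{compressedlb} as the outer graph and a clique as the inner graph. Given a target exponent $\eps > 0$, I fix small $\eps_0, \delta_0 > 0$ (tuned below) and let $(G_0, P_0)$ come from Lemma \ref{compressedlb} with $|V(G_0)| = n$, $|E(G_0)| = O(n^{3/2})$, $|P_0| = \Theta(n^{2-\eps_0})$, and common pair distance $\Delta = \Theta(n^{\delta_0})$. Set $k := \Delta$ and form $G' := G_0 \op K$ by first subdividing every edge of $G_0$ into a length-$k$ path, then replacing each $v \in V(G_0)$ by a clique $K_v$ on $\deg(v)$ nodes, with each subdivision path formerly incident to $v$ attached to a distinct ``attachment'' node of $K_v$. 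Define $P' = \{(s',t') : (s,t) \in P_0\}$, where $s' \in K_s$ and $t' \in K_t$ are the attachment nodes dictated by the first and last outer edges of the unique $G_0$-shortest path between $s$ and $t$. A direct count gives $N := |V(G')| = \Theta(k \cdot |E(G_0)|) = \Theta(n^{3/2 + \delta_0})$.

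For each $p' = (s',t') \in P'$, I expect to verify that the unique shortest $G'$-path traces the original $G_0$-path in the outer graph, consuming $k$ subdivision edges along each outer edge and exactly one \emph{critical clique edge} inside each of the $\Delta - 1$ internal cliques $K_{v_i}$ (namely the edge between the two attachment nodes prescribed by the outer path). The key combinatorial observation is that the sets of critical clique edges owned by distinct pairs are disjoint: if $p_1', p_2' \in P'$ shared a critical edge at some $K_v$, their outer traces would both traverse the 2-path $(u,v,w)$ at $v$ in $G_0$, contradicting the 2-path-disjointness in Lemma \ref{compressedlb}. Consequently the total number of critical clique edges in $G'$ is exactly $|P_0|(\Delta - 1)$, partitioned into a block of $\Delta - 1$ per pair.

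Now suppose $F \subseteq G'$ satisfies $|E(F)| = O(N^{4/3 - \eps})$. I choose $\eps_0 = \eps$ and $\delta_0$ small enough that the exponent inequality $(3/2 + \delta_0)(4/3 - \eps) < 2 - \eps_0$ holds; then for large $n$ we have $|E(F)| < |P_0|$, so $F$ retains fewer than $|P_0|$ critical clique edges in total. By the disjointness above, some $p' \in P'$ has all $\Delta - 1$ of its critical edges absent from $F$, and I claim $\dist_F(p') \geq \dist_{G'}(p') + \Omega(\Delta)$. Let $\pi$ be the shortest $F$-path between the endpoints of $p'$ and split on its outer trace (the sequence of cliques it visits). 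If the outer trace of $\pi$ matches that of $\rho_{G_0}(s,t)$, then inside each of the $\Delta - 1$ internal cliques $\pi$ must connect the prescribed entry and exit attachments without the direct clique edge between them; the local detour costs at least $+1$ per clique, and since the fixed outer trace confines each detour to its own clique these penalties compose additively to give error $\geq +(\Delta - 1)$. Otherwise $\pi$'s outer trace differs from $\rho_{G_0}(s,t)$; by uniqueness of the $G_0$-shortest path, the alternative outer trace contains at least one extra $G_0$-edge, which translates into $k = \Delta$ additional subdivision hops in $G'$ and yields error $\geq +\Omega(\Delta)$.

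Unwinding the parameters finishes the argument: the error is $\Omega(\Delta) = \Omega(n^{\delta_0}) = \Omega(N^{\delta})$ with $\delta := \delta_0 / (3/2 + \delta_0) > 0$, as required. The step I expect to demand the most care is the first case of the distance analysis, where one must rule out any global shortcut in $F$ that could amortize the per-clique $+1$ penalties into something smaller than $+(\Delta - 1)$. The crucial point is that once the outer trace is fixed, the entry and exit attachments at every internal $K_{v_i}$ are determined, so each detour must live inside its own clique and cannot be shared across cliques; the penalties therefore add up honestly. The second case is cleaner because any deviation from the unique outer shortest path immediately pays a full $+k$ from the edge-extension step, independently of what happens inside cliques.
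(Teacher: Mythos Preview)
Your approach is essentially the paper's: apply Lemma~\ref{compressedlb}, perform the Obstacle Product (edge extension followed by clique replacement), and argue via disjointness of critical clique edges that any sparse subgraph must drop all critical edges of some pair. The counting argument and the Case~1 analysis are correct.

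There is, however, a genuine gap in your Case~2 reasoning, and it stems directly from your choice of subdivision length $k = \Delta$. You write that a deviating outer trace ``translates into $k = \Delta$ additional subdivision hops \ldots\ independently of what happens inside cliques.'' That inference is not valid as stated: the canonical $G'$-path has length $D = \Delta k + (\Delta - 1)$, of which $\Delta - 1$ edges are \emph{clique} edges. An alternative outer trace of length $\Delta + 1$ contributes $(\Delta + 1)k$ subdivision edges, but if you really ignore what happens inside cliques you get only $(\Delta+1)k - D = k - (\Delta - 1) = 1$, not $\Omega(\Delta)$. To rescue the argument you must either (i) additionally observe that, because $\pi$ is simple, the outer trace has no immediate backtracks and hence the alternative also spends at least $m-1 \ge \Delta$ clique edges, giving total length $\ge D + k + 1$; or (ii) do what the paper does and take the subdivision length to be $3\Delta$ rather than $\Delta$, which makes the bare subdivision penalty at least $2\Delta + 1$ on its own and renders the clique accounting unnecessary (this is exactly the role of Claim~\ref{Faithful}). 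Your instinct that Case~1 is the delicate one and Case~2 is ``cleaner'' is therefore backwards for your parameter choice; the paper's factor of $3$ is precisely what makes Case~2 a one-liner.

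A minor secondary point: Lemma~\ref{compressedlb} hands you $\delta_0$ as a function of $\eps_0$ rather than as a free parameter, so ``choose $\delta_0$ small enough'' needs a word of justification. It can in fact be made arbitrarily small by tracing back through the proof of Lemma~\ref{celb}, but you should say so; the paper instead sets the outer parameter to $\alpha = \eps/2$ and checks the resulting exponent inequality directly.
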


Let us now fix an $\eps>0$.  Our next steps will be to choose an appropriate $G, K$, describe how to build the obstacle product $G \op K$, and then argue that this is sufficient to prove Theorem \ref{splb}.

\paragraph{Starting Graph} We start by applying Lemma \ref{compressedlb} with parameter $\alpha=\eps/2$, to obtain a graph $G=(V,E)$ on $|V|=n$ nodes, $|E|=O(n^{3/2})$ edges, and a pair set $P \subseteq V \times V$ of size $|P|=\Theta(n^{2-\alpha})$ such that:
the distance between any pair $\{s,t\} \in P$ is exactly $\Delta$, for some $\Delta = \Theta(n^{\beta})$ and $0<\beta<\alpha$ (which implies $0<\beta< \eps/2)$, and the shortest paths between pairs in $P$ are $2$-path disjoint.

We next take the OP of $G$ with a clique $K$.  This OP consists of the following two transformations:

\paragraph{OP -- Edge Extension}
First, we replace every edge in $G$ with a path of length $\ell=3\Delta$.\footnote{This is overkill: the proof still works if we simply set $\ell = \Delta$.  However, a few additional minor technical details must be observed to push this argument through, so we use $\ell = 3\Delta$ here to maintain a bit of simplicity.}

More formally, if the edge $e = \{u,v\} \in E$, then we add the nodes $(e,1),\ldots,(e,\ell)$ to $V'$ and add the edges $\{u, (e,1) \}, \{v, (e,\ell)\},$ and $\{ (e,i), (e,i+1)\}$ for all $i \in [\ell-1]$ to $E'$. The choice of which endpoint $(e,1)$ or $(e,\ell)$ we connect to $u$ or $v$ is arbitrary. 

Note that the total number of nodes in $G'$ after this transformation is $O(n^{3/2+\beta})$.

\paragraph{OP -- Clique Replacement}
Next, for each {\em original} node $v \in V$ (i.e. not nodes introduced by the edge extension step), we replace $v$ with a clique on $\deg_{G}(v)$ nodes, with each incoming edge connected to a unique node in the new clique.

More formally, let $I(v) \subseteq E$ be the set of edges incident to $v \in V$, and introduce a new node $(v,e)$ for each edge $e \in I(v)$.
These new nodes will be connected in a clique: for each $e_i,e_j \in I(v)$ we add the edge $\{ (v,e_i), (v,e_j)\}$ to $E'$.
The clique that replaces a node $v$ will be denoted $K_v$.
We will call an edge contained in one of these cliques a {\em clique edge}.
After this replacement, each node $(v,e)$ will still be attached to an endpoint of the path corresponding to the edge $e$, i.e. we will have an edge $\{(v,e), (e,\alpha)\}$ where $\alpha \in \{1,\ell\}$ depends on the (arbitrary) ordering we chose in the edge-extension step.

The number of nodes added in this clique replacement step is exactly $\sum_{v \in V} \deg_G(v) = 2|E|$, which is $O(n^{3/2})$.
An important feature of this step is that we have introduced many new edges to $G'$; we will implicitly discuss this in the proof of correctness.\\

Let $k=\Delta-1 = \Theta(n^{\beta})$. 
To complete the proof, we will argue that any subgraph of $G'$ with fewer than $O(n^{2-\alpha})$ edges must distort the distances by more than $+k$.

\paragraph{Proof of Correctness}

First, we will build a pair set $P' \subseteq V' \times V'$; to prove Theorem \ref{splb} it will be sufficient to only consider node pairs in $P'$.  
For each pair $\{s,t\} \in P$, let the nodes on the unique $(s,t)$-shortest path in $G$ be $s= u_0 \to u_1 \to \cdots \to u_{\Delta-1} \to u_\Delta=t$, and let the $\Delta$ edges on this path be denoted $e_i = \{u_{i-1}, u_i \}$ for all $i \in [\Delta]$ (these definitions use the feature of Lemma \ref{compressedlb} that all pairs in $P$ are at distance exactly $\Delta$ in $G$).  Consider the nodes $s'=(s,e_1), t' = (t,e_{\Delta}) \in V'$, and add $\{s', t'\}$ as a pair to $P'$.

We will next reason about the structure of the shortest path between $s'$ and $t'$ in $G'$.
The following $(s',t')$ path will exist by construction:
Starting from $s'=(s_1,e_1)$, walk the path in $G'$ that replaced the edge $e_1$ in $G$, reaching the node $(u_1,e_1)$.
Then, walk the clique edge $\{ (u_1,e_1),(u_1,e_2)\}$, which we will denote by $e^{s,t}_1$ since it is the first clique edge on the $(s',t')$-path.
Then, similarly, walk the path that replaced to the edge $e_2$, and then walk the clique edge $\{ (u_2, e_2), (u_2, e_3)\}$, and so on until we reach $t'$.
This walk will traverse each of the $\ell$-length paths that replaced our edges $e_i$ (as well as some clique edges). 
The $i^{th}$ clique edge crossed will be $e^{s,t}_i = \{ (u_i, e_i) , (u_i, e_{i+1}) \}$, for $i \in [\Delta-1]$.
Thus, this walk will reach the node $t' = (t,e_{\Delta})$ after it has taken exactly $\Delta$ paths of length $ \ell$, plus $(\Delta-1)$ clique edges, i.e. $D := \Delta \cdot \ell + (\Delta -1)$ edges in total.
Let $\rho(s',t')$ be the path in $G'$ defined by this walk, and let $C^{s,t}$ be the set of clique edges on this path.

The following claim is an integral part of our proof:

\begin{claim}
\label{Disjoint}
For any two distinct pairs $\{s_1,t_1\} \ne \{s_2,t_2 \} \in P$, the corresponding clique edge sets $C^{s_1,t_1}, C^{s_2,t_2}$ are disjoint.
\end{claim}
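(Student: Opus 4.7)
The plan is to show that each clique edge in $C^{s,t}$ encodes a specific $2$-path in the original graph $G$, and then invoke the $2$-path disjointness guarantee from Lemma \ref{compressedlb} to finish.

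More concretely, first I would observe that by construction, every clique edge in $C^{s,t}$ lies inside some clique $K_{u_i}$ and has the specific form $e^{s,t}_i = \{(u_i, e_i), (u_i, e_{i+1})\}$, where $e_i = \{u_{i-1}, u_i\}$ and $e_{i+1} = \{u_i, u_{i+1}\}$ are consecutive edges on the unique shortest $(s,t)$-path in $G$. Thus each clique edge $e^{s,t}_i$ naturally records both the vertex $u_i$ at which the $(s,t)$-path ``turns'' and the identity of the two edges of $G$ incident to $u_i$ used by that turn, i.e.\ it records a $2$-path $(u_{i-1}, u_i, u_{i+1})$ of $G$.

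Next I would suppose, for contradiction, that some clique edge $f$ lies in $C^{s_1, t_1} \cap C^{s_2, t_2}$ for two distinct pairs $\{s_1, t_1\} \ne \{s_2, t_2\} \in P$. Then $f$ lies inside a single clique $K_v$ for some $v \in V$, and by the identification above, $f$ determines a unique $2$-path $(a, v, b)$ of $G$. Because $f \in C^{s_1, t_1}$, this $2$-path is a subpath of the unique shortest path in $G$ from $s_1$ to $t_1$; because $f \in C^{s_2, t_2}$, the same $2$-path is a subpath of the unique shortest path from $s_2$ to $t_2$. That is, these two shortest paths share the consecutive edge pair $\{a, v\}, \{v, b\}$.

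Finally I would apply property (2) of Lemma \ref{compressedlb}, which guarantees that the shortest paths in $G$ associated to distinct pairs in $P$ are $2$-path disjoint. This directly contradicts the previous step and completes the argument. I do not expect a main obstacle here: the only subtle point is making sure that the indexing $e^{s,t}_i = \{(u_i, e_i), (u_i, e_{i+1})\}$ genuinely records the (unordered) $2$-path of $G$, which is immediate from the construction but worth stating explicitly before invoking $2$-path disjointness.
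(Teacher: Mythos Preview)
Your proposal is correct and follows essentially the same approach as the paper: assume a shared clique edge, read off the corresponding $2$-path in $G$, and invoke the $2$-path disjointness from Lemma~\ref{compressedlb} for a contradiction. Your write-up is slightly more explicit about how a clique edge encodes the $2$-path, but the argument is the same.
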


\begin{proof}
Assume towards a contradiction that $C^{s_1,t_1}, C^{s_2,t_2}$ share a clique edge $\{ (u,e_i) , (u,e_j)\}$ for some $u\in V$ and $e_i,e_j \in I_G(v)$.
This implies that the $2$-path in $G$ composed of $e_i$ and $e_j$ is a subpath of the unique shortest $(s_1,t_1)$-path in $G$ \emph{and} of the unique shortest path $(s_2,t_2)$-path.
By Lemma \ref{compressedlb} this can only happen if $\{s_1,t_1\} = \{s_2,t_2\}$, which is a contradiction.
\end{proof}

Next, we examine the structure of alternate short paths between $s$ and $t$.
\begin{claim}
\label{Faithful}
Let $\{s, t\}$ be a pair in $P$, let $\{s',t'\}$ be the corresponding pair in $P'$, and let $\rho'(s', t')$ be a (not necessarily shortest) $(s', t')$ path of length less than $D+ \Delta$.
Let $\mathcal{K} = (K_s, K_{v_1}, \ldots, K_{v_{\Delta-1}}, K_t)$ be the sequence of cliques created in the Clique Replacement step that intersect $\rho(s', t')$, and let $\mathcal{K'} = (K_s, K_{v'_1}, \ldots, K_{v'_{x-1}}, K_t)$ be the sequence of cliques that intersect $\rho'(s', t')$.  Then $\mathcal{K} = \mathcal{K'}$.
\end{claim}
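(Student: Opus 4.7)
My plan is to show that any $(s', t')$-path $\rho'$ in $G'$ of length less than $D + \Delta$ must traverse exactly $\Delta$ of the length-$\ell$ ``extended edges,'' in a sequence corresponding to the unique shortest $(s,t)$-path in $G$; the equality $\mathcal{K'} = \mathcal{K}$ then follows immediately.

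The first step is to exploit the clean two-level structure of $G'$. After the Edge Extension and Clique Replacement steps, $V(G')$ decomposes into the cliques $\{K_v : v \in V\}$ together with the interiors of the length-$\ell$ paths that replaced the edges of $G$. Crucially, every interior node of an extended edge has degree exactly $2$ in $G'$, so any simple $(s',t')$-path is forced to traverse each extended edge it enters completely, from one clique endpoint to the other. Consequently, $\rho'$ factors canonically as an alternating sequence of (i) sub-paths that live entirely within a single clique and (ii) full traversals of extended edges. Reading off the original endpoints of the extended edges in the order they are used yields a walk $s = w_0, w_1, \ldots, w_y = t$ in $G$, where $y$ is precisely the number of extended edges used by $\rho'$.

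The next step is a direct length comparison. Each extended edge contributes exactly $\ell = 3\Delta$ edges to $|\rho'|$, so $|\rho'| \geq y \cdot \ell$. Combined with the hypothesis $|\rho'| < D + \Delta = \Delta \ell + 2\Delta - 1$, this gives $3\Delta \cdot y < 3\Delta^2 + 2\Delta - 1$, forcing $y \leq \Delta$. On the other hand, the induced walk realizes an $(s,t)$-walk in $G$, so $y \geq \dist_G(s,t) = \Delta$. Hence $y = \Delta$, and the induced walk has length equal to $\dist_G(s,t)$, so it must be a shortest $(s,t)$-walk in $G$. By the uniqueness property guaranteed by Lemma \ref{compressedlb}, this walk coincides with $s = u_0, u_1, \ldots, u_\Delta = t$, and therefore the sequence of cliques traversed by $\rho'$ is exactly $(K_{u_0}, K_{u_1}, \ldots, K_{u_\Delta}) = \mathcal{K}$.

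The main obstacle I anticipate is the careful bookkeeping in the first step: one must justify that $\rho'$ really does factor as alternating clique sub-paths and full extended-edge traversals, with no exotic behavior such as entering an extended edge partway and backing out. The degree-$2$ structure of interior nodes of extended edges, combined with simplicity of $\rho'$, handles this cleanly, but it is worth isolating the observation explicitly before performing the length arithmetic, since every subsequent step rests on this structural factorization.
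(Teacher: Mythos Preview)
Your proposal is correct and follows essentially the same approach as the paper: project $\rho'$ down to a walk in $G$ by reading off the cliques it visits, use the length hypothesis to force that walk to have length exactly $\Delta$, and then invoke uniqueness of the shortest $(s,t)$-path in $G$. Your version is somewhat more explicit about the structural factorization (via the degree-$2$ observation on interior nodes of extended edges) and your arithmetic bounds $y$ directly from $|\rho'| \ge y\ell$, whereas the paper argues by contradiction assuming one extra clique is visited; these are cosmetic differences rather than a genuinely different route.
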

\begin{proof}
Note that, by construction, the node sequence $(s, v_1, \dots, v_{\Delta-1}, t)$ is the unique shortest $(s, t)$ path in $G$.  Similarly, note that the node sequence $(s, v'_1, \dots, v'_{x-1}, t)$ is an $(s, t)$ path in $G$.  If these node sequences are identical, then the claim holds.  Assume towards a contradiction that the node sequences differ.  Then the latter sequence must be at least one element longer than the former sequence (i.e. $x > \Delta$).  Both $\rho(s', t')$ and $\rho'(s', t')$ must travel a path of length $\ell$ between any two cliques, and so the number of path edges traveled by $\rho'(s', t')$ is at least $\ell = 3\Delta$ more than the number of path edges traveled by $\rho(s', t')$.  Meanwhile, $\rho(s', t')$ walks exactly $\Delta$ clique edges, while $\rho'(s', t')$ walks exactly $x$ clique edges.  Since $x \ge \Delta$, this implies that the total length of $\rho'(s', t')$ is at least $+3\Delta$ longer than $\rho(s', t')$.  This is a contradiction, and so it must be the case that the original node sequences are equal, and so $\mathcal{K} = \mathcal{K}'$.
\end{proof}

This lets us prove the next claim, which is the key to lower bounding the number of edges in any additive $+k$ spanner (recall that we chose $k=\Delta-1$).

\begin{claim}
\label{CliqueEdges}
For any pair $\{s,t\} \in P$, any path of length less than $D+ k$ from $s'$ to $t'$ must use at least one of the clique edges $C^{s,t}$.
\end{claim}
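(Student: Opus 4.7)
The plan is to apply Claim \ref{Faithful} to pin down the global structure of any sufficiently short alternative $(s',t')$ path, and then run a local length count inside each clique. Since $k = \Delta-1 < \Delta$, any $(s',t')$ path $\rho'$ of length less than $D+k$ certainly has length less than $D+\Delta$, so Claim \ref{Faithful} tells us that $\rho'$ visits exactly the same sequence of cliques $(K_s, K_{u_1}, \ldots, K_{u_{\Delta-1}}, K_t)$ as $\rho(s',t')$, in the same order and with no repetitions.

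With the sequence of cliques fixed, I would next argue that the entry and exit nodes of $\rho'$ in each intermediate clique $K_{u_i}$ are forced to be $(u_i,e_i)$ and $(u_i,e_{i+1})$, respectively. The reason is that the only connection in $G'$ between consecutive cliques $K_{u_{i-1}}$ and $K_{u_i}$ passes through the edge-extension path of the unique $G$-edge $e_i = \{u_{i-1},u_i\}$, whose endpoints are $(u_{i-1},e_i) \in K_{u_{i-1}}$ and $(u_i,e_i) \in K_{u_i}$; since $\rho'$ enters each clique only once, it must traverse this entire $\ell$-length path. Now suppose toward a contradiction that $\rho'$ uses no edge of $C^{s,t}$. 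The only $C^{s,t}$-edge inside $K_{u_i}$ is the direct clique edge $e_i^{s,t} = \{(u_i,e_i),(u_i,e_{i+1})\}$, so avoiding $C^{s,t}$ forces $\rho'$ to take at least two clique edges inside $K_{u_i}$ via some third clique node (and in the edge case $\deg_G(u_i) = 2$ there is no such third node at all, so $\rho'$ is forced to use $e_i^{s,t}$ outright and we are done immediately). Summing up, $\rho'$ traverses $\Delta$ inter-clique paths of length $\ell$ plus at least $2(\Delta-1)$ clique edges, giving $|\rho'| \geq \Delta\ell + 2(\Delta-1) = D + (\Delta-1) = D+k$, contradicting the hypothesis $|\rho'| < D+k$.

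The one delicate point in this plan is the entry/exit identification step, which hinges on Lemma \ref{compressedlb}'s guarantee that the shortest $(s,t)$ path in $G$ is unique (and on the fact that $G$ is a simple graph), so that there is no ambiguity about which edge-extension path $\rho'$ can use between consecutive cliques in its clique sequence. Once that is established the remainder is clean bookkeeping, and I do not anticipate a further obstacle beyond writing this counting out carefully.
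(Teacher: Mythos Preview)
Your proposal is correct and follows essentially the same approach as the paper: invoke Claim~\ref{Faithful} to fix the clique sequence, deduce that the entry/exit nodes at each intermediate clique $K_{u_i}$ are forced to be $(u_i,e_i)$ and $(u_i,e_{i+1})$, and then count that avoiding every $e_i^{s,t}$ costs at least one extra edge per intermediate clique, yielding length $\ge D + (\Delta-1) = D+k$. The paper phrases the last step via per-clique subpath lengths $d_i \ge 2$ rather than your ``at least two clique edges,'' but the arithmetic is identical; your explicit handling of the $\deg_G(u_i)=2$ corner case and of the simplicity-of-$G$ point is, if anything, slightly more careful than the paper's version.
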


\begin{proof}
Let $\rho'(s',t')$ be a path from $s'$ to $t'$ of length less than $D + k$.
By Claim~\ref{Faithful}, we know that $\rho'(s',t')$ must pass through exactly the same cliques in the same order as $\rho(s',t')$, which implies that it must walk exactly the same set of paths introduced in the Edge Extension step of the OP.
This implies that for all $i \in [\Delta]$, $\rho'(s',t')$ contains a path from $(u_i,e_i)$ to $(u_i,e_{i+1})$ as a subpath. Let the length of this subpath be $d_i$ for some $d_i \geq 1$, and note that $d_i=1$ if and only if $\rho'(s',t')$ uses the clique edge $e^{s,t}_i \in C^{s,t}$.
A direct calculation shows that the length of $\rho'(s',t')$ is equal to
$$ \ell (\Delta-1) + \sum_{i=1}^{\Delta-1} d_i = D + \sum_{i=1}^{\Delta-1} (d_i-1)$$
If $d_i > 1$ for all $i$, then this is equal to at least $D + (\Delta - 1) = D + k$, which contradicts the assumption that $\rho'(s', t')$ has length less than $D + k$.  Thus, $d_i = 1$ for some $i$, and so $\rho'(s', t')$ uses the clique edge $\{(u_i, e_i), (u_i, e_{i+1})\}$.
\end{proof}

We can now prove our main theorem from a direct counting argument.
%\begin{proof} [Proof of Theorem \ref{splb}]
Recall that our graph $G'$ has $N=O(n^{3/2+\beta})$ nodes, and that our pair set $P'$ is of size $|P'| = \Theta(n^{2-\alpha})$.
Assume towards a contradiction that there is a subgraph $H=(V',E_H)$ of $G'$ on $|E_H|=O(N^{4/3-\eps}) = O(n^{(3/2+\beta)(4/3-\eps)})=O(n^{2+\frac{4}{3}\beta -\frac{3}{2} \eps -\eps \beta })$ edges in which for all pairs $\{s',t'\} \in P'$ the distance in $H$ is no more than the distance in $G'$ plus $k$, i.e. at most $D+k$.
By Claim \ref{CliqueEdges}, for each such pair $\{s',t'\} \in P'$ at least one of the clique edges in $C^{s,t}$ must exist in $H$, and from Claim \ref{Disjoint} we know that all these edges are distinct. 
Therefore, $E_H$ must contain at least one distinct edge per pair in $P'$, for a total of $|E_H|=\Omega(n^{2-\alpha})$ edges.
These two bounds on $|E_H|$ imply that:
\begin{align*}
2-\alpha &\leq 2+\frac{4}{3}\beta -\frac{3}{2} \eps -\eps \beta\\
\left(\frac{3}{2} +\beta \right) \eps &\leq \alpha +  \frac{4}{3}\beta
\end{align*}
We additionally know that $0 < \beta < \alpha \leq \eps/2$.  We complete the proof by arguing that there is no possible setting of $\alpha, \beta$ that makes these inequalities hold simultaneously, and so we have a contradiction and the subgraph $H$ cannot exist.  To see this, we will substitute $\beta = 0$ into the left hand side of the inequality (giving the left hand side of the inequality its lowest conceivable value), and simultaneously substitute $\alpha = \beta = \frac{\eps}{2}$ into the right hand side of the inequality (giving the right hand side of the inequality its highest conceivable value).  We obtain: 
\begin{align*}
\left(\frac{3}{2} + 0 \right) \eps &\le \frac{\eps}{2} + \frac{4}{3} \cdot \frac{\eps}{2}\\
\frac{3}{2} \cdot \eps &\le \frac{7}{6} \cdot \eps\\
%(\frac{3}{2})\eps &\le (\frac{7}{6}) \eps\\
%\frac{\eps}{3} &\le 0
\end{align*}
which contradicts the fact that $\eps > 0$.  This implies Theorem \ref{splb} with parameter $\delta$ set to $\beta$ (since $k=\Omega(n^{\beta})$).
 
%\end{proof}

\subsection{Strong Incompressibility}

Finally, we provide some additional observations about our construction to prove our strong incompressibility results.  We will prove:
\begin{theorem} \label{complb}
For all $\eps > 0$, there exists a $\delta > 0$ such that there is no mapping $\psi$ from graphs $G$ on $n$ nodes to bitstrings of length $O(n^{4/3 - \eps})$ such that the distances of $G$ can always be recovered from the bitstring $\psi(G)$ within $+n^{\delta}$ error.
\end{theorem}
Our argument proceeds along the lines of the well known proof that the Girth Conjecture implies that graphs cannot be compressed into $o(n^{1+1/t})$ bits without incurring a multiplicative error of $(2t-1)$ \cite{TZ05}, which is based on the works of Matou{\v{s}}ek \cite{Mat96} and Bourgain \cite{Bou85}.

For any subset of pairs $T \subseteq P'$, we define $G_T=(V,E_T)$ to be the subgraph of $G'$ obtained by removing all clique edges $C^{s',t'}$ corresponding to all pairs $\{s',t'\}$ in $T$.
Note that there are $2^{|P'|}$ such subgraphs $G_T$.

\begin{claim}
For all $T \subseteq P'$ and $\{s',t'\} \in P' \setminus T$ we have that $\dist_{G_T}(s',t') \leq D$.
\end{claim}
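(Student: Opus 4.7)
The plan is to exhibit an explicit $(s', t')$-path in $G_T$ of length at most $D$, by showing that the canonical path $\rho(s', t')$ used throughout the earlier analysis survives the edge removals. Recall that $\rho(s', t')$ has length exactly $D = \Delta \cdot \ell + (\Delta - 1)$ and consists of two kinds of edges: the $\ell$-length path edges introduced in the Edge Extension step (one path per edge $e_i$ on the unique shortest $(s,t)$-path in $G$), and the $\Delta - 1$ clique edges $C^{s,t} = \{e^{s,t}_1, \ldots, e^{s,t}_{\Delta-1}\}$.

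The key observation is that the construction of $G_T$ only removes clique edges, and specifically only those clique edges that belong to $C^{s'',t''}$ for some $\{s'',t''\} \in T$. In particular, no path edge from the Edge Extension step is ever removed, so all path edges of $\rho(s', t')$ remain in $G_T$. It therefore suffices to show that none of the clique edges in $C^{s,t}$ are removed either.

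For this I would invoke Claim~\ref{Disjoint}: for every pair $\{s'',t''\} \in T$, since $\{s'',t''\} \ne \{s',t'\}$ (because $\{s',t'\} \notin T$), the clique edge sets $C^{s,t}$ and $C^{s'',t''}$ are disjoint. Taking the union over all $\{s'',t''\} \in T$, the set of removed clique edges is disjoint from $C^{s,t}$, so every edge of $\rho(s',t')$ still exists in $G_T$. Therefore $\rho(s', t')$ is a valid $(s', t')$-path in $G_T$ of length $D$, which immediately gives $\dist_{G_T}(s', t') \leq D$.

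There is no real obstacle here — the claim is essentially a bookkeeping consequence of the clique-edge disjointness guaranteed by Claim~\ref{Disjoint}. The only thing to be careful about is ensuring that $G_T$ is defined to remove \emph{only} clique edges (as stated), so that the Edge Extension path segments survive untouched; once that is noted, the rest of the argument is a one-line application of disjointness.
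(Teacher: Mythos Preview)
Your proposal is correct and follows exactly the same approach as the paper: observe that $G_T$ only deletes clique edges belonging to pairs in $T$, invoke Claim~\ref{Disjoint} to conclude that none of the clique edges in $C^{s,t}$ are deleted (since $\{s',t'\}\notin T$), and hence the canonical length-$D$ path $\rho(s',t')$ survives in $G_T$. The paper's proof is simply a terser version of what you wrote.
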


\begin{proof}
The only edges that are missing from $G_T$ are clique edges corresponding to pairs in $T$.
By Claim \ref{Disjoint}, none of these clique edges are in $C^{s',t'}$ for any pair $\{s',t'\} \notin T$.
Therefore, the length $D$ path $\rho(s',t')$ remains in $G_T$.
\end{proof}

\begin{claim}
For all $T \subseteq P'$ and $\{s',t'\} \in T$ we have that $\dist_{G_T}(s',t') > D+k$.
\end{claim}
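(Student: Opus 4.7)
The plan is to apply the contrapositive of Claim~\ref{CliqueEdges}, together with the fact that passing to a subgraph can only increase pairwise distances. Note first that the deletions associated with the other pairs $\{s'',t''\} \in T \setminus \{\{s',t'\}\}$ are not needed for the lower bound: by Claim~\ref{Disjoint} those deletions are disjoint from $C^{s,t}$, and additional deletions can only make $\dist_{G_T}(s',t')$ larger.

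I would then fix $\{s',t'\} \in T$, its underlying pair $\{s,t\} \in P$, and the canonical clique edges $C^{s,t} = \{e^{s,t}_1,\dots,e^{s,t}_{\Delta-1}\}$ appearing on $\rho(s',t')$. By construction, every edge of $C^{s,t}$ has been removed in $G_T$. Claim~\ref{CliqueEdges} asserts that any $(s',t')$-path in $G'$ of length less than $D+k$ must traverse at least one edge of $C^{s,t}$; since $G_T \subseteq G'$ and none of these edges survive, no $(s',t')$-path in $G_T$ has length less than $D+k$. This already yields $\dist_{G_T}(s',t') \geq D+k$.

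To obtain the strict inequality stated in the claim, I would rerun inside $G_T$ the length computation from the proof of Claim~\ref{CliqueEdges}. By Claim~\ref{Faithful}, any $(s',t')$-path in $G_T$ of length at most $D+k < D+\Delta$ visits exactly the cliques $K_s, K_{u_1},\dots,K_{u_{\Delta-1}}, K_t$ in that order, so its length equals $D + \sum_{i=1}^{\Delta-1}(d_i-1)$, where $d_i$ is the length of its subpath inside $K_{u_i}$ from $(u_i,e_i)$ to $(u_i,e_{i+1})$. Since the unique one-edge option $e^{s,t}_i$ has been deleted, every $d_i \geq 2$, confirming length $\geq D+k$.

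The main obstacle, as I see it, is upgrading this $\geq$ to the strict $>$: equality would require $d_i = 2$ for every $i$, i.e.\ a two-edge detour through some third clique node $(u_i,e_{j_i})$ inside each $K_{u_i}$. My plan for ruling this out is to combine the $2$-path disjointness from Lemma~\ref{compressedlb} with Claim~\ref{Disjoint}: any such detour corresponds to a pair of $2$-paths $(e_i,e_{j_i})$ and $(e_{j_i},e_{i+1})$ meeting at $u_i$ in $G$, and a consistent global choice of detours at every $K_{u_i}$ should be impossible because it would collide either with the uniqueness of shortest paths in $P$ or with the removed clique edges for other members of $T$, forcing at least one $d_i \geq 3$ and hence length $> D+k$.
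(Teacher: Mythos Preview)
Your first two paragraphs are precisely the paper's proof: the paper simply notes that $G_T$ contains no edge of $C^{s,t}$ and invokes Claim~\ref{CliqueEdges}. That is all the paper does.

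Your third paragraph, however, tries to establish something stronger than what Claim~\ref{CliqueEdges} actually gives. Claim~\ref{CliqueEdges} says that any $(s',t')$-path of length \emph{less than} $D+k$ must use an edge of $C^{s,t}$; the contrapositive yields only $\dist_{G_T}(s',t') \ge D+k$, not the strict inequality written in the statement. The paper's one-line proof contains the same slip (it writes ``cannot be $\le D+k$'' where Claim~\ref{CliqueEdges} only supports ``cannot be $< D+k$''). This is harmless for the downstream incompressibility argument, which needs only a gap of at least $k$ between the two distance values.

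Your plan for forcing the strict inequality has a real gap. Take $T=\{\{s',t'\}\}$. Then the only clique edges removed are the $\Delta-1$ edges of $C^{s,t}$, exactly one per clique $K_{u_i}$. At any $K_{u_i}$ with $|K_{u_i}|\ge 3$ there is a third vertex $(u_i,e_j)$ with $e_j\notin\{e_i,e_{i+1}\}$, and both $\{(u_i,e_i),(u_i,e_j)\}$ and $\{(u_i,e_j),(u_i,e_{i+1})\}$ survive in $G_T$, since $C^{s,t}$ contains exactly one edge of $K_{u_i}$. Neither the $2$-path disjointness of Lemma~\ref{compressedlb} nor Claim~\ref{Disjoint} obstructs these detours: $2$-path disjointness constrains shortest paths in $G$, not which clique edges were deleted, and Claim~\ref{Disjoint} says nothing when $T$ has no ``other members.'' So $d_i=2$ at every such $K_{u_i}$ is fully consistent with the hypotheses you invoke, and your argument that some $d_i\ge 3$ is forced does not go through.
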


\begin{proof}
Any path from $s'$ to $t'$ in $G_T$ cannot use any of the clique edges in $C^{s,t}$. By Claim~\ref{CliqueEdges}, the length of such path cannot be $\leq D+k$.
\end{proof}

The above two claims show that there exists a set $\mathcal{G}$ of $2^{|P'|} = 2^{\Theta(N^{4/3-\eps'})}$ graphs (which are $G_T$ for each possible pair subset $T \subseteq P'$) on the same node set $V'$ of size $N$, as well as a pair set $P' \subseteq V' \times V'$ of size $|P'|=\Theta(n^{4/3-\eps'})$ (for some $\eps'>0$), such that the following condition is true:

For any two graphs $G_1,G_2 \in \mathcal{G}$ there is a pair of nodes $\{s',t'\} \in P'$ for which 
$$
\left| \dist_{G_1} (s',t') - \dist_{G_2}(s',t') \right| \geq k
$$
where $k = \Omega(N^{\delta})$.
This means that for any compression function $\psi$ that takes a graph $G$ on $n$ nodes and produces a bitstring of length less than $|P|$, there exist two different graphs $G_1, G_2 \in \mathcal{G}$ that map to the same bitstring.  Thus, when the distance between $s'$ and $t'$ is recovered from this bitstring, it will be at least $\pm \Omega(n^{\delta})$ from the correct value for either $G_1$ or $G_2$.  Therefore, no such compression function can output bitstrings of length $O(n^{4/3-\eps'})$ such that distances can be recovered within $+O(n^{\delta})$ error.
%\end{document}

%\documentclass{article}
%
%\usepackage{amsmath, amsthm}
%\newtheorem{theorem}{Theorem}
%\newtheorem{lemma}{Lemma}
%
%\newcommand{\eps}{\varepsilon}
%\DeclareMathOperator{\dist}{dist}
%
%\begin{document}

\section{Lower Bounds for Linear Size Graph Compression}

We next calculate our lower bound on the amount of (polynomial) additive error in distances that must be suffered in order to compress a graph into near-linear space.
We show:

\begin{theorem} \label{complblinear}
For all $\eps > 0$, there exists a $\delta > 0$ such that there is no mapping $\psi$ from graphs $G$ on $n$ nodes to bitstrings of length $O(n^{1 + \delta})$ such that the distances of $G$ can always be recovered from the bitstring $\psi(G)$ within $+n^{1/22 - \eps}$ error.
\end{theorem}

We begin with the following graphs:
\begin{theorem} [\cite{CE06}]\footnote{Coppersmith and Elkin prove a host of lower bounds, parameterized by an integer $d \ge 2$.  To help alleviate some tedious algebra, we have instantiated their construction here with $d=3$, which optimizes our lower bound ($d=4$ optimizes equally well).} \label{thm:dplb}
For any $n$ and $p = O(n^{3/2})$, there is an infinite family of $n$ node graphs $G = (V, E)$ on $\Theta\left(n^{3/5} p^{3/5}\right)$
edges, as well as pair sets $P \subseteq V \times V$ of size $|P| = p(n)$, with the following properties:
\begin{enumerate}
\item For each pair in $P$, there is a unique shortest path between its endpoints
\item These paths are edge disjoint
\item The edge set of $G$ is precisely the union of these paths
\item \label{it:eqlen} For all $\{s, t\} \in P$, we have $\dist_G(s, t) = \Delta$ for some value $\Delta = \Theta(n^{3/5}p^{-2/5})$
\end{enumerate}
\end{theorem}
This construction can be found in section 4 of \cite{CE06} (the statement is also in section 1.2).
Note that their construction does not have property \ref{it:eqlen} above.
However, it is easy to enforce this property using the following folklore technique.
Given an instance $G = (V, E), P$ satisfying the first three properties of Theorem \ref{thm:dplb} but not necessarily the fourth, let $\Delta := \frac{|E|}{2|P|}$ be half the average length of a path.
For each pair $(s, t) \in P$, split its unique shortest path $\pi$ into as many edge-disjoint subpaths as possible of length exactly $\Delta$ each.
Remove $(s, t)$ from $P$ and add the endpoints of each such subpath as a new pair to $P$.
If $\Delta$ does not divide $|\pi|$ and so there is a ``remainder'' subpath at the end of length less than $\Delta$, we discard this subpath and delete all of its edges from $G$.
At the end of this process, it is clear that all four properties in Theorem \ref{thm:dplb} hold for the new instance $G' = (V, E'), P'$.
Additionally, note that (1) we delete at most $|P| \cdot \Delta = \frac{|E|}{2}$ edges from $G$ to produce $G'$, so the density of $G'$ has only changed by a constant factor; and (2) we have $|P| = \frac{|E|}{\Delta}$ and $|P'| = \frac{|E'|}{\Delta}$, and since $|E'| = \Theta(|E|)$ the size of $P'$ has also only changed by a constant factor.

Unlike Lemma \ref{celb}, the construction in \cite{Alon01} or the one given in the appendix do not suffice; one must use the construction in \cite{CE06} to obtain this particular tradeoff between the number of pairs $p$ and the density of the graph.

We start with a graph $G = (V, E)$ drawn from Theorem \ref{thm:dplb}, choosing $p = n^{11/9 + \delta}$ and so $G$ has $\Theta\left( n^{4/3 + 3\delta/5} \right)$ edges.
Our construction proceeds exactly as before; we feed this graph through the path packing step, the edge extension step, and the clique replacement step.
Looking forward, we will reserve $n$ for the number of nodes in the original graph $G$, and $N$ to describe the number of nodes in $G$ after all transformations have been applied.

By the exact same argument for incompressibility given earlier in this paper, we have:
\begin{lemma} \label{lem:lincomp}
There is no mapping $\psi$ from graphs $G$ on $N$ nodes to bitstrings of length $o(p^2)$ such that the distances of $G$ can always be recovered from the bitstring $\psi(G)$ within $+o(\Delta)$ error.
\end{lemma}
We now simply calculate these parameters to obtain Theorem \ref{complblinear}.
\begin{itemize}
\item We have $p = n^{11/9 + \delta}$
\item We have $\Delta = \Theta\left(n^{3/5}p^{-2/5} \right) = \Theta\left(n^{3/5}(n^{11/9 + \delta})^{-2/5} \right) = \Theta\left(n^{1/9 - 2\delta/5}\right)$.
\item Originally, we have $n$ nodes and $\Theta(n^{4/3 + 3\delta/5})$ edges in $G$.
After the path packing step, we have $n^2$ nodes and $\Theta(n^{7/3 + 3\delta/5})$ edges in $G$.
The edge extension step then introduces
$$\Delta \cdot \Theta(n^{7/3 + 3\delta/5}) = \Theta\left(n^{1/9 - 2\delta/5} \cdot n^{7/3 + 3\delta/5} \right) = \Theta\left(n^{22/9 + \delta/5} \right)$$
new nodes to $G$, while the clique replacement step introduces $\Theta(n^{7/3 + 3\delta/5})$ new nodes to $G$ (i.e. one node for each edge in $G$ prior to the edge extension step).
Note that $n^{22/9 + \delta/5}$ dominates $n^{7/3 + 3\delta/5}$ for sufficiently small $\delta > 0$.
Thus, if we choose $\delta$ small enough, then we have $N = \Theta\left(n^{22/9 + \delta/5} \right)$.
\item Note: we then have
$$\Delta = \Theta\left(N^{(1/9 - 2\delta/5) / (22/9 + \delta/5)}\right) = \Theta\left(N^{1/22 - O(\delta)}\right)$$
and
$$p^2 = \Theta\left( N^{(11/9 + \delta)/(22/9 + \delta/5)} \right)^2 = \Theta\left( N^{(22/9 + 2\delta)/(22/9 + \delta/5)} \right) = \Theta\left( N^{1 + 81\delta/100 - O(\delta^2)} \right)$$
\end{itemize}
We now prove Theorem \ref{complblinear} by directly substituting these computed values into Lemma \ref{lem:lincomp}.
Given some $\eps > 0$ (in the context of Theorem \ref{complblinear}), we first choose $\delta > 0$ (controlling the size of the pair set in the graph $G$ from Theorem \ref{thm:dplb}) small enough that
$$\Delta > N^{1/22 - \eps/2}$$
This is possible because $\Delta = \Theta\left(N^{1/22 - O(\delta)}\right)$.
Hence, the fact that distances cannot be recovered within $+o(\Delta)$ error (from Lemma \ref{lem:lincomp}) implies that distances cannot be recovered within $+N^{1/22 - \eps}$ error, as desired.

Meanwhile, the compression threshold in Lemma \ref{lem:lincomp} is
$$o(p^2) = o\left( N^{1 + 81\delta/100 - O(\delta^2)} \right).$$
Thus, if we choose $\delta$ sufficiently small and choose some constant $\delta' = 4\delta/5 > 0$, then one cannot compress graphs into bitstrings of size $O(n^{1 + \delta'})$ and still recover distances within the error bound $+N^{1/22 - \eps}$ given above.
This completes the proof of Theorem \ref{complblinear}.

\section{Conclusion}

Our work shows that graphs cannot be compressed into $O(n^{4/3 - \eps})$ bits while maintaining distances within a constant (or even subpolynomial) additive error.  We thus implicitly show that the $4/3$ exponent is tight for additive spanners and emulators, and unbeatable for any other notion of graph compression where distance error is measured additively.
We consider it particularly interesting that the analogous statement is false for multiplicative error; that is, many graph compression schemes are capable of obtaining near-linear sized compression with only a constant multiplicative distance error.

Our main technical contribution is the Obstacle Product, which seems to be a powerful framework for generating families of graphs that are hard to efficiently sketch while approximately maintaining distances.
Below, we mention some open questions from the realm of spanners that are left open after this work, and for which our framework could be applicable.

\paragraph{Completing the picture of constant error additive spanners}
The only remaining gap (in the exponent) in our understanding of constant error additive spanners is the $+4$ error case. 
The lower bounds allow for the possibility of $+4$ spanners on $O(n^{4/3})$ edges, while upper bounds have only realized $\Oish(n^{7/5})$ \cite{Chechik13}.
Note that this gap does not exist for emulators.
Closing this gap is perhaps the most natural and intriguing open question that remains.

Our work also leaves $n^{o(1)}$ gaps between upper and lower bounds; that is, our lower bound holds against spanners on $O(n^{4/3 - \eps})$ edges, but it is still conceivable that an upper bound of $O(n^{4/3 - o(1)})$ edges could be achieved for a small enough $o(1)$ factor.
Hence, the following question remains open: do all graphs have $+k$ spanners on $O(\frac{n^{4/3}}{\log{n}})$ edges, or maybe even $O(\frac{n^{4/3}}{\log^k{n}})$?
%Our current construction allows this; with a finer-grained analysis of the rate at which $\delta$ and $\eps$ approach $0$, we can show that our graphs lack spanners of {\em any} constant additive error on $n^{4/3} / 2^{\omega(\sqrt{\log n})}$ edges.
It would be interesting to close this $n^{o(1)}$ gap.

\paragraph{Other Kinds of Spanners} 
We have shown that $+\Omega(n^{\delta})$ additive error is required for spanners on $O(n^{4/3 - \eps})$ edges.  It is then natural to ask precisely how much polynomial error is required for sparse spanners; for example, what is the smallest constant $\delta$ such that all graphs have $+O(n^{\delta})$ spanners of nearly linear size? 
The current best upper bound on $\delta$ is $\frac{3}{7}$, due to \cite{BV16}.
For general compression (specifically emulators), the upper bound on $\delta$ is $\frac{1}{3}$, due to \cite{myspan}.
Finding the exact tradeoff between the error parameter $n^\delta$ and the sparsity $n^{4/3-\eps}$ is still open.
Related notions are \emph{sublinear error} spanners, where if the original distance is $d$ it must be preserved up to $d+o(d)$, and $(\alpha,\beta)$ \emph{mixed} spanners, where even $(1+\alpha)\cdot d + \beta$ is allowed.
The framework described in this paper was adapted by the authors and Pettie \cite{ABP17} to obtain nearly-tight lower bounds in these settings as well, although some notable gaps still remain here to be closed.
Finally, it would be interesting to understand the exact tradeoff for \emph{pairwise} spanners \cite{CGK13,CE06,KV13,Kavitha15}.
Both of our main results hold even if only the distances within a pair set of size $\Theta(n^{4/3})$ need to be preserved.  When only smaller pair sets are considered, it is known that one can obtain $+6$ spanners on $O(n^{4/3 - \eps})$ edges \cite{Kavitha15}, although the exact nature of the tradeoff between pair set size, error, and sparsity remains open.

%\paragraph{Pairwise Distance Preserver Lower Bounds.} Perhaps the most promising way to solve some of the above problems is by improving the $\delta, \eps$ tradeoff in the original graphs from our first lemma.  These graphs are called {\em pairwise distance preserver lower bounds}, and there is currently a sizeable knowledge gap regarding their optimal sparsity \cite{CE06,BV16}. 
%In particular, it is known that for any $|P| = n^2 / 2^{\omega(\sqrt{\log n})}$ we can construct a graph $G$ and pair set $P$ such that the pairs in $P$ have unique edge-disjoint shortest paths of length $\omega(1)$ (see \cite{CE06}, or the construction in Appendix~\ref{ksum}).
% Through our obstacle product construction, this statement directly translates into the $n^{4/3} / 2^{O(\sqrt{\log n})}$ lower bound against constant error additive spanners mentioned above.
%This leads to the following interesting open question: can one improve the bound on $|P|$ in the above statement all the way up to $o(n^2)$?  If so, it would imply the very interesting statement that constant error spanners on $o(n^{4/3})$ edges are not possible in general.

%\end{document}

\section*{Acknowledgments}

We are grateful to Virginia Vassilevska Williams for her sustained advice throughout this project, and to Seth Pettie and Ryan Williams for advice on an earlier draft of this paper that has considerably improved its presentation.
We thank Noga Alon and Uri Zwick for helpful comments.
We also thank an anonymous reviewer for useful corrections and writing suggestions.

\bibliographystyle{plain}
\bibliography{SpannersBIB}

\begin{thebibliography}{10}

\bibitem{AB16}
Amir Abboud and Greg Bodwin.
\newblock {Error Amplification for Pairwise Spanner Lower Bounds}.
\newblock In {\em Proc. of 27th SODA}, pages 841--–854, 2016.

\bibitem{ABP17}
Amir Abboud, Greg Bodwin, and Seth Pettie.
\newblock A hierarchy of lower bounds for sublinear additive spanners.
\newblock In {\em Proc. of 28th SODA}, pages 568--576, 2017.

\bibitem{AICM99}
Donald Aingworth, Chandra Chekuri, Piotr Indyk, and Rajeev Motwani.
\newblock {Fast estimation of diameter and shortest paths (without matrix
  multiplication)}.
\newblock {\em SIAM J. Comput.}, 28:1167--1181, 1999.

\bibitem{ACM96}
Donald Aingworth, Chandra Chekuri, and Rajeev Motwani.
\newblock {Fast Estimation of Diameter and Shortest Paths (without Matrix
  Multiplication)}.
\newblock In {\em Proc. of 7th SODA.}, pages 547--553, 1996.

\bibitem{Alon01}
Noga Alon.
\newblock Testing subgraphs in large graphs.
\newblock In {\em Proc. of 42nd FOCS}, pages 434--441, 2001.

\bibitem{ADD+93}
I~Alth\"{o}fer, G~Das, D~Dobkin, D~Joseph, and J~Soares.
\newblock {On sparse spanners of weighted graphs}.
\newblock {\em Discrete \& Computational Geometry}, 9:81--100, 1993.

\bibitem{Awe85}
Baruch Awerbuch.
\newblock {Complexity of network synchronization}.
\newblock {\em Journal of the ACM}, pages 32,804--823, 1985.

\bibitem{BKMP05}
Surender Baswana, Telikepalli Kavitha, Kurt Mehlhorn, and Seth Pettie.
\newblock {New constructions of ($\alpha$, $\beta$)-spanners and purely
  additive spanners}.
\newblock In {\em Proc. of 16th SODA}, pages 672--681, 2005.

\bibitem{BKMP10}
Surender Baswana, Telikepalli Kavitha, Kurt Mehlhorn, and Seth Pettie.
\newblock {Additive spanners and ($\alpha$, $\beta$)-spanners}.
\newblock {\em ACM Transactions on Algorithms}, 7(1):5, 2010.

\bibitem{Beh46}
Felix~A Behrend.
\newblock On sets of integers which contain no three terms in arithmetical
  progression.
\newblock {\em Proceedings of the National Academy of Sciences of the United
  States of America}, 32(12):331, 1946.

\bibitem{B17}
Greg Bodwin.
\newblock Linear size distance preservers.
\newblock In {\em Proc. of the 28th SODA}, pages 600--615, 2017.

\bibitem{myspan}
Greg Bodwin and Virginia {Vassilevska Williams}.
\newblock {Very Sparse Additive Spanners and Emulators}.
\newblock In {\em Proc. of 6th ITCS}, pages 377--382, 2015.

\bibitem{BV16}
Greg Bodwin and Virginia {Vassilevska Williams}.
\newblock {Better Distance Preservers and Additive Spanners}.
\newblock In {\em Proc. of 27th SODA}, 2016.

\bibitem{BCE03}
B\'{e}la Bollob\'{a}s, Don Coppersmith, and Michael Elkin.
\newblock {Sparse distance preservers and additive spanners}.
\newblock In {\em Proc. of 14th SODA}, pages 414--423, 2003.

\bibitem{Bou85}
Jean Bourgain.
\newblock On lipschitz embedding of finite metric spaces in hilbert space.
\newblock {\em Israel Journal of Mathematics}, 52(1-2):46--52, 1985.

\bibitem{Chechik13}
Shiri Chechik.
\newblock {New Additive Spanners}.
\newblock In {\em Proc. of 24th SODA}, pages 498--512, 2013.

\bibitem{ChechikSurvey}
Shiri Chechik.
\newblock {Additive Spanners}.
\newblock In {\em Encyclopedia of Algorithms}. Springer US, 2014.

\bibitem{C+12}
Victor Chepoi, Feodor~F. Dragan, Bertrand Estellon, Michel Habib, Yann
  Vax{\`{e}}s, and Yang Xiang.
\newblock Additive spanners and distance and routing labeling schemes for
  hyperbolic graphs.
\newblock {\em Algorithmica}, 62(3-4):713--732, 2012.

\bibitem{CDY05}
Victor Chepoi, Feodor~F. Dragan, and Chenyu Yan.
\newblock Additive sparse spanners for graphs with bounded length of largest
  induced cycle.
\newblock {\em Theor. Comput. Sci.}, 347(1-2):54--75, 2005.

\bibitem{CE06}
Don Coppersmith and Michael Elkin.
\newblock {Sparse Sourcewise and Pairwise Distance Preservers}.
\newblock {\em SIAM Journal on Discrete Mathematics}, pages 463--501, 2006.

\bibitem{CGK13}
Marek Cygan, Fabrizio Grandoni, and Telikepalli Kavitha.
\newblock {On Pairwise Spanners}.
\newblock In {\em Proc. of 30th STACS}, pages 209--220, 2013.

\bibitem{DHZ96}
Dorit Dor, Shay Halperin, and Uri Zwick.
\newblock {All Pairs Almost Shortest Paths}.
\newblock In {\em Proc. of 37th FOCS}, pages 452--461, 1996.

\bibitem{DA14}
Feodor~F. Dragan and Muad Abu{-}Ata.
\newblock Collective additive tree spanners of bounded tree-breadth graphs with
  generalizations and consequences.
\newblock {\em Theor. Comput. Sci.}, 547:1--17, 2014.

\bibitem{D+12}
Feodor~F. Dragan, Derek~G. Corneil, Ekkehard K{\"{o}}hler, and Yang Xiang.
\newblock Collective additive tree spanners for circle graphs and polygonal
  graphs.
\newblock {\em Discrete Applied Mathematics}, 160(12):1717--1729, 2012.

\bibitem{DFG11}
Feodor~F. Dragan, Fedor~V. Fomin, and Petr~A. Golovach.
\newblock Spanners in sparse graphs.
\newblock {\em J. Comput. Syst. Sci.}, 77(6):1108--1119, 2011.

\bibitem{Elkin05}
Michael Elkin.
\newblock {Computing almost shortest paths}.
\newblock {\em ACM Trans. Algorithms}, pages 1(2):283--323, 2005.

\bibitem{EP04}
Michael Elkin and David Peleg.
\newblock (1 + $\epsilon$, $\beta$)-spanner constructions for general graphs.
\newblock {\em SIAM J. Comput., 33(3):608–631}, 2004.

\bibitem{girth}
Paul Erd\"{o}s.
\newblock {Extremal problems in graph theory}.
\newblock {\em Theory of graphs and its applications}, pages 29--36, 1964.

\bibitem{Kavitha15}
Telikepalli Kavitha.
\newblock {New Pairwise Spanners}.
\newblock In {\em Proc. of 32nd STACS}, pages 513--526, 2015.

\bibitem{KV13}
Telikepalli Kavitha and Nithin Varma.
\newblock {Small Stretch Pairwise Spanners}.
\newblock In {\em Proc. of 40th ICALP}, pages 601--612, 2013.

\bibitem{Knu14}
Mathias B{\ae}k~Tejs Knudsen.
\newblock {Additive spanners: A simple construction}.
\newblock In {\em Proc. of 14th SWAT}, pages 277--281, 2014.

\bibitem{LS91}
Arthur~L. Liestman and Thomas~C. Shermer.
\newblock {Additive spanners for hypercubes}.
\newblock {\em Parallel Processing Letters}, 1(01):35--42, 1991.

\bibitem{LS93}
Arthur~L. Liestman and Thomas~C. Shermer.
\newblock {Additive graph spanners}.
\newblock {\em Networks}, 23(4):343--363, 1993.

\bibitem{Mat96}
Ji{\v{r}}{\'\i} Matou{\v{s}}ek.
\newblock On the distortion required for embedding finite metric spaces into
  normed spaces.
\newblock {\em Israel Journal of Mathematics}, 93(1):333--344, 1996.

\bibitem{Parter14}
Merav Parter.
\newblock {Bypassing Erd\"{o}s' Girth Conjecture: Hybrid Stretch and Sourcewise
  Spanners}.
\newblock In {\em Proc. of 41st ICALP}, pages 608--619, 2014.

\bibitem{PS89}
David Peleg and Alejandro Schaffer.
\newblock {Graph spanners}.
\newblock {\em Journal of Graph Theory}, pages 13:99--116, 1989.

\bibitem{PU89}
David Peleg and Jeffrey~D. Ullman.
\newblock {An optimal synchronizer for the hypercube}.
\newblock {\em SIAM Journal of Computing}, pages 18,740--747, 1989.

\bibitem{Pettie09}
Seth Pettie.
\newblock {Low distortion spanners}.
\newblock {\em ACM Transactions on Algorithms}, 6(1), 2009.

\bibitem{RTZ05}
Liam Roditty, Mikkel Thorup, and Uri Zwick.
\newblock {Deterministic Constructions of Approximate Distance Oracles and
  Spanners}.
\newblock In {\em Proc. of 32nd ICALP}, pages 261--272, 2005.

\bibitem{TZ05}
Mikkel Thorup and Uri Zwick.
\newblock {Approximate distance oracles}.
\newblock {\em Journal of the ACM}, 52(1):1--24, 2005.

\bibitem{TZ06}
Mikkel Thorup and Uri Zwick.
\newblock {Spanners and emulators with sublinear distance errors}.
\newblock In {\em Proc. of 17th SODA}, pages 802--809, 2006.

\bibitem{Woo06}
David~P. Woodruff.
\newblock {Lower bounds for additive spanners, emulators, and more}.
\newblock In {\em Proc. of 47th FOCS}, pages 389--398, 2006.

\bibitem{Woo10}
David~P. Woodruff.
\newblock {Additive Spanners in Nearly Quadratic Time}.
\newblock In {\em Proc. of 37th ICALP}, pages 463--474, 2010.

\end{thebibliography}

\appendix
\section{Graphs with many long disjoint shortest paths}

In this section we provide a proof of Lemma \ref{celb} which is the starting point for our lower bound constructions.
One way to prove the Lemma is by simple modifications to the constructions of Coppersmith and Elkin \cite{CE06} for \emph{pairwise distance preserver} lower bounds.
Their construction relied on lemmas from discrete geometry regarding the size of the convex hull of points in a radius $r$ ball in the $d$-dimensional integer lattice.
We will provide a simpler proof of this lemma (our proof is less efficient in the tradeoff between $\eps$ and $\delta$, but will suffice for our purposes).
Our proof relies on generalizations of the well-known arithmetic progression free sets that have found countless applications in combinatorics and theoretical computer science.

\paragraph{$k$-Average-Free Sets}
We say that a set of integers $S \subseteq [N]$ is {\em $k$-Average-Free} if, for any collection of $k$ integers  that are not all equal, $x_1, \dots, x_k \in S$, we have
$$\sum \limits_{i=1}^k x_i \ne k \cdot \bar{x}$$
for all $\bar{x} \in S$.  In other words, $S$ does not contain the arithmetic mean of any $k$ of its elements, except for the trivial case when all of these elements are equal.  An adaptation of a construction by Behrend \cite{Beh46} proves the existence of $k$-average-free sets with surprisingly large density. 
%We give a proof in the full version of the paper.

\begin{lemma}
\label{ksum}
For all $\eps>0$ there is a $\delta>0$ such that for infinitely many integers $N$ there exist $k$-average-free sets of size $\Theta(N^{1-\eps})$, with $k = \Omega(N^{\delta})$.
\end{lemma}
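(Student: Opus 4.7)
The plan is to adapt Behrend's classical construction of $3$-term arithmetic progression free sets to the $k$-average-free setting, with care taken to handle the fact that $k$ will itself grow polynomially in the parameters. Fix integers $m, d \geq 1$ to be tuned later, and consider the lattice grid $\{0,1,\dots,m-1\}^d$. The squared Euclidean norm of such a lattice point is an integer in $\{0, 1, \dots, d(m-1)^2\}$, so the pigeonhole principle yields some value $r^2$ for which the sphere of squared radius $r^2$ centered at the origin contains $|V| = \Omega(m^{d-2}/d)$ lattice points; let $V$ denote this set.

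The crucial geometric fact is that $V$ is ``vectorially $k$-average-free'' for every $k \geq 2$. Indeed, if $\vec v^{(1)},\dots,\vec v^{(k)} \in V$ and $\vec{\bar v} \in V$ satisfy $\sum_i \vec v^{(i)} = k \vec{\bar v}$, then $\vec{\bar v}$ is the centroid of the $\vec v^{(i)}$. By the strict convexity of $\|\cdot\|^2$, we have $\|\vec{\bar v}\|^2 \leq \tfrac{1}{k}\sum_i \|\vec v^{(i)}\|^2$, with equality if and only if all $\vec v^{(i)}$ coincide; since both sides equal $r^2$, equality holds and all $\vec v^{(i)} = \vec{\bar v}$.

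To transfer this property to a subset of $[N]$, I would encode each lattice vector in base $B := km + 1$ via $\phi(\vec v) = \sum_{i=1}^d v_i B^{i-1}$ and take $S := \phi(V) \subseteq \{0, 1, \dots, B^d - 1\}$, setting $N = B^d$. Since each coordinate of $\vec v^{(i)}$ is at most $m-1 < B$ and the coordinates of $\sum_i \vec v^{(i)}$ and of $k \vec{\bar v}$ are bounded by $k(m-1) < B$, no carries occur in base $B$. Hence the integer identity $\sum_i \phi(\vec v^{(i)}) = k \phi(\vec{\bar v})$ is equivalent to the vector identity $\sum_i \vec v^{(i)} = k \vec{\bar v}$, so $S$ inherits the $k$-average-free property and $|S| = |V| = \Omega(m^{d-2}/d)$.

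It remains to tune parameters. Assume $\eps \in (0,1)$ (the statement is trivial otherwise). Set $\alpha := \eps/2$ and $k := \lfloor m^{\alpha} \rfloor$, so $N = B^d \leq m^{d(1+\alpha) + o(1)}$. Since $(1+\alpha)(1-\eps) = 1 - \eps/2 - \eps^2/2 < 1$, I can fix a constant $d = d(\eps)$ large enough that $d - 2 > d(1+\alpha)(1-\eps)$. Then for every sufficiently large $m$, $|S| = \Omega(m^{d-2}/d) \geq N^{1-\eps}$, and $k = m^\alpha = \Omega(N^\delta)$ for $\delta := \alpha/(2d(1+\alpha)) > 0$, giving infinitely many admissible $N$. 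The main conceptual step is recognizing that Behrend's strict-convexity argument extends verbatim from $k=3$ to arbitrary $k$; the only technical care required is choosing the base $B$ large enough to suppress carries when $k$ itself scales as $m^{\alpha}$, which is exactly what the choice $B = km+1$ accomplishes.
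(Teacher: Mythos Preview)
Your proposal is correct and follows essentially the same approach as the paper: a Behrend-type construction that pigeonholes lattice points onto a common sphere, uses strict convexity to get vectorial $k$-average-freeness, and then encodes vectors as integers in a base large enough (roughly $km$) to suppress carries when summing $k$ terms. The parameter tuning differs cosmetically---the paper fixes $d=\lceil 3/\eps\rceil$ and $\delta=1/(2d^2)$ up front and solves for $k$ in terms of $p$, whereas you fix $k=\lfloor m^{\eps/2}\rfloor$ and then choose $d$ large---but the two calculations are equivalent, and your convexity justification is arguably cleaner than the paper's phrasing.
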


\begin{proof}
Fix $\eps>0$, and let $d = \lceil 3/\eps \rceil \in \mathbb{N}$ and $\delta = 1/(2d^2) >0$.
Let $p$ be a parameter that can take any value in $\mathbb{N}$; we will construct a $k$-average-free set $A \subseteq [N]$ where $k,N,$ and $|A|$ are functions of $p$. 
%Later, we will argue that these functions prove the lemma.

Let $X = [p]^d$ be the set of all $d$-dimensional vectors whose coordinates are integers in $[p]$.
For any $r \in [ d p^2]$, let
$$X_r := \{x \in X \ \mid \ \|x\|^2_2 = r \}$$
i.e. $X_r$ is the set of vectors for which the square of the Euclidean norm is $r$.
Observe that for all $x \in X$ we have
$$1 \le \|x\|^2_2 \le dp^2$$
and so the sets $X_1,\ldots,X_{dp^2}$ form a partition of the set $X$.
Let $V=X_r$ be the largest of these sets.
Since $|X|=p^d$, we have that
$$|V| \geq \frac{p^d}{dp^2} = \frac{p^{d-2}}{d}.$$ 

We now claim that the set $V$ is the analogue of a vector-valued ``$k$-average-free set" for {\em any} $k \in \mathbb{N}$, in the sense that the average of any set of vectors (not all equal) in $V$ is not an element of $V$.
This follows simply from the fact that all vectors in $V$ have the same Euclidean norm, and so they form a strictly convex set in $\mathbb{R}^d$.
By definition of convexity, for any $x_1, \dots, x_k, \bar{x} \in V$, we have
$$\left\| \sum \limits_{i=1}^k x_i \right\|_2 < \| k \bar{x} \|_2$$
whenever $x_1, \dots, x_k, \bar{x}$ are not all equal; thus, $\sum \limits_{i=1}^k x_i \ne k\bar{x}$, and so $V$ is a $k$-average-free set of vectors.

Our next step is to convert the vector set $V$ into an integer set $A$ in a way that preservers summation.
We remark that this step is not strictly necessary; i.e. it would be possible to prove Lemma \ref{celb} using only a vector-valued average free set.
However, it will simplify the proofs that follow to consider integers rather than vectors.
Our transformation will be defined by the function $f : [p]^d \to [N]$, where $N = ((k+1)p)^d$, in which for any vector $v =(v_1,\ldots,v_d) \in [p]^d$ we interpret $v_1, \ldots, v_d$ as the digits of a number in base $(k+1)p$.  More formally, we define
$$f(v) := v_1 \cdot q^0 + v_2 \cdot q^1 + \cdots + v_d \cdot q^{d-1}$$
where $q=(k+1)p$.
Note that this sum cannot exceed $q^d$, and so we have $f(v) \in [N]$.
An important feature of our transformation is that there will be no carry bits that transfer between the ``coordinates" when summing $k$ integers that correspond to vectors, and so $f$ distributes over $k$-length sums of vectors in $V$.  In other words, for any set of $k$ vectors $v^1, \dots, v^k \in V$, we have
$$\sum \limits_{i=1}^k f(v^i) = \left( \sum \limits_{i=1}^k v^i_1 \right)q^0 + \dots + \left( \sum \limits_{i=1}^k v^i_d \right)q^{d-1}.$$
Note that $\sum \limits_{i=1}^k v^i_j \le \sum \limits_{i=1}^k p \le kp$, and so we have
$$\left( \sum \limits_{i=1}^k v^i_1 \right)q^0 + \dots + \left( \sum \limits_{i=1}^k v^i_d \right)q^{d-1} = f \left( \sum \limits_{i=1}^k v^i \right)$$

And thus we have $\sum \limits_{i=1}^k f(v^i) = f \left( \sum \limits_{i=1}^k v^i \right)$, and so because $V$ is a $k$-average-free set of vectors, it follows that the corresponding integer set $A = f(V)$ is a $k$-average-free set of integers.

The final step in our proof is to verify that the set $A$ has size $\Theta(N^{1 - \eps})$ when $k = \Omega(N^{\delta})$, as claimed.  The parameters will work out correctly when we set $k = \lfloor p^{\frac{\delta d}{1-\delta d}} \rfloor-1.$
Note that we now have
$$N = ((k+1)p)^d = \Theta \left( \left(p^{\frac{1}{1-\delta d}} \right)^d \right) = \Theta(p^{d/(1-\delta d)})$$
and so $k = \Omega(N^{\delta})$.
Lastly, we must show that $|A| = \Omega(N^{1 - \eps})$.  Recall that $|A| \geq \frac{p^{d-2}}{d}$ and that $d= \lceil 3/\eps \rceil$ is a constant, which implies that $|A| = \Omega(p^{d-2})$.  We must then show that
$$p^{d-2} = \Omega(N^{1 - \eps}) = \Omega((p^{d/(1 - \delta d)})^{1 - \eps}) = \Omega(p^{(1 - \eps) \cdot d / (1 - \delta d)})$$
Or, equivalently, that
\begin{align*}
d-2 &\geq (1-\eps) \cdot \frac{d}{1-\delta d}\\
 (d-2)(1-\delta d) &\geq (1-\eps) \cdot d\\
\eps d - 2 + (2 \delta d - \delta d^2) &\geq 0
\end{align*}
Substituting in our previous choices of $d \geq 3/\eps$ and $\delta = \frac{1}{2d^2}$, we can verify that this inequality holds:
\begin{align*}
\eps d - 2 + (2 \delta d - \delta d^2) \geq 3 -2 + ( 0 - 1/2) > 0.
\end{align*}
This shows that $|A| = \Omega(N^{1 - \eps})$.  To prove the claim that $|A| = \Theta(N^{1 - \eps})$, simply note that we may discard elements from $A$ at will without destroying its average freeness property.
\end{proof}

\paragraph{From $k$-Average-Free sets to graphs}
We are now ready to prove Lemma~\ref{celb} which we repeat here for convenience.
The key idea is to let each node represent an integer in a careful way such that the $k$-average-free property translates into the uniqueness of certain shortest paths.

\begin{lemma} For all $\eps > 0$, there is a $0 < \delta < \eps$, and an infinite family of graphs $G = (V, E)$ and pair sets $P \subseteq V \times V$ with the following properties:
\begin{enumerate}
\item For each pair in $P$, there is a unique shortest path between its endpoints
\item These paths are edge disjoint
\item For all $\{s, t\} \in P$, we have $\dist_G(s, t) = \Delta$ for some value $\Delta = \Theta(n^{\delta})$
\item $|P| = \Theta(n^{2 - \eps})$
\end{enumerate}
\end{lemma}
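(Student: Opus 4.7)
The plan is to translate the $k$-average-free set from Lemma~\ref{ksum} into a layered ``slope graph'' whose shortest paths inherit their uniqueness directly from the combinatorial property of $A$. Given $\eps > 0$, first apply Lemma~\ref{ksum} with a parameter $\eps' < \eps$ (to be calibrated) to obtain, for infinitely many $N$, a $k$-average-free set $A \subseteq [N]$ with $|A| = \Theta(N^{1 - \eps'})$ and $k = \Omega(N^{\delta_0})$ for some $\delta_0 > 0$. Define $V = \{0, 1, \dots, k\} \times \{0, 1, \dots, M\}$ with $M = (k+1)N$ so that all arithmetic stays in range, and include the undirected edge $\{(i, y), (i+1, y+a)\}$ for every layer $i \in \{0, \dots, k-1\}$, every $a \in A$, and every admissible $y$. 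Let $P$ consist of $\{(0, y), (k, y + ka)\}$ for each $a \in A$ and each $y \in \{0, 1, \dots, N\}$.

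All four properties of Lemma~\ref{celb} fall out of this construction. Because every edge crosses exactly one layer, any path from a layer-$0$ vertex to a layer-$k$ vertex has length at least $k$, and the uniform slope-$a$ path realizes this bound, giving property~3 with $\Delta = k$. Consequently every length-$k$ path is monotone and corresponds to a sequence of slopes $a_1, \dots, a_k \in A$ summing to $ka$; the $k$-average-free property of $A$ forces $a_1 = \dots = a_k = a$, yielding the unique shortest path of property~1. For edge-disjointness (property~2), if two pair-paths shared an edge $\{(i, z), (i+1, z+a^*)\}$, then both paths would move uniformly at slope $a^*$, and the shared endpoint $(i,z)$ would determine each starting vertex as $(0, z - i a^*)$, forcing the two pairs to coincide.

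The only substantive step left is parameter bookkeeping. With $n = |V| = \Theta(k^2 N) = \Theta(N^{1 + 2\delta_0})$, the distance exponent becomes $\delta := \delta_0/(1 + 2\delta_0)$, so $\Delta = k = \Theta(n^\delta)$. Meanwhile $|P| = \Theta(|A| \cdot N) = \Theta(N^{2 - \eps'}) = \Theta(n^{(2 - \eps')/(1 + 2\delta_0)})$, so to achieve property~4 it suffices to arrange $(2 - \eps')/(1 + 2\delta_0) \geq 2 - \eps$ and then trim $P$ if necessary. This rearranges to a bound of the form $\delta_0(2 - \eps) \leq \tfrac{1}{2}(2\eps - \eps')$, which Lemma~\ref{ksum} lets us satisfy by taking $\eps'$ strictly below $\eps$ and absorbing the slack into the arbitrarily small $\delta_0$ it returns; a concrete choice such as $\eps' = \eps/2$ works. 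The resulting $\delta$ satisfies $\delta < \delta_0 < \eps$, securing $0 < \delta < \eps$. I expect no deeper difficulty: the real conceptual step is recognizing that $k$-average-free sets are precisely what forces uniqueness of shortest paths in a layered slope graph, and once that observation is in hand the remainder is routine calibration.
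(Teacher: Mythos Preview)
Your proposal is correct and is essentially the same construction and argument as the paper's: a layered graph with edges given by ``slopes'' from a $k$-average-free set $A$, pairs $\{(0,y),(k,y+ka)\}$, uniqueness from the average-free property, edge-disjointness from the observation that a shared edge pins down both the slope and the start, and the same parameter calibration (the paper takes $\eps' = \eps/2$ and then caps $\delta_0 \le \eps/10$). The only slip is a typo in your displayed inequality---it should read $\delta_0(2-\eps) \le \tfrac{1}{2}(\eps-\eps')$ rather than $\tfrac{1}{2}(2\eps-\eps')$---but this does not affect the argument.
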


\begin{proof}
Fix $\eps>0$ and apply Lemma \ref{ksum} with parameter $\alpha = \eps /2 >0$ to obtain an infinite sequence of integers $N$ and $k$-average-free sets $A \subseteq [N]$ of size $|A|=\Omega(N^{1-\alpha})$, in which $k=\Omega(N^{\beta})$ for some $\beta>0$. We will assume that $0<\beta$ is no more than $\eps/10$ (otherwise, we may decrease $\beta$ without destroying the above properties).
Consider the infinite sequence of integers defined by $n=N \cdot k^2 = \Theta(N^{1+2 \beta})$, and we will describe how to construct a graph $G=(V,E)$ on $n$ nodes that will be in our family, using the corresponding $k$-average-free set $A \subseteq [N]$.\\

\emph{The Nodes.} For each integer $x \in [(k+1) \cdot N]$ and ``layer index" $j \in \{0,\ldots,k\}$, we define a node $(x,j)$ and add it to $G$.  In other words, we define
$$V := \{ (x,j) \mid x \in [(k+1)N], \ j \in \{0,\ldots,k\} \}.$$

\emph{The Edges.} The edge set of $G$ will be defined using the set $A$.
Our graph will be layered such that all edges have the form $\{ (x,j), (y,j+1)\}$.
For all $j \in \{0,\ldots,k-1\}$ and for each integer $a \in A$, we add an edge $\{ (x,j), (y,j+1)\}$ to $E$ for all integers $x,y \in [kN]$ such that $x + a = y$.
That is:
%\begin{align*}
%E := \{\{(x, j), (y, j+1)\} \ \mid \ j \in \{0, \ldots, k-1\}&\\
%\exists a \in A \ x + a = y& \}
%\end{align*}
$$E := \{\{(x, j), (y, j+1)\} \ \mid \ j \in \{0, \ldots, k-1\} \quad \exists a \in A \ x + a = y \}.$$

\emph{The Pair Set.} Next, we define our pair set $P \subseteq V \times V$.
For every integer $x \in [N]$ and integer $a \in A$, we add the pair $\{s,t\}$ to $P$ where $s = (x,0)$ and $t = (x+ k \cdot a, k)$.
An immediate observation is that $|P|= N \cdot |A| = \Omega(N^{2-\alpha})$.
Formally:
$$P := \{ \{(x, 0), (x + k \cdot a, k)\} \ \mid \ x \in [N], a \in A \}.$$

We now turn to reasoning about the shortest paths between the pairs in $P$.
First, note that for any pair $s = (x,0)$ and $t = (x+ k \cdot a, k)$ in $P$, the edge set
$$E^{s, t} := \{\{ (x+ (j-1) \cdot a, j-1) , (x+ j \cdot a, j) \} \ \mid \ j \in [k] \}$$
forms a path of length $k$ from $s$ to $t$. We will refer to this path as $\rho(s,t)$.
Since our graph is layered, any $(s, t)$ path must contain at least $k$ edges, and so $\rho(s, t)$ is a shortest path between $s$ and $t$.  Additionally, any other path $\rho'(s, t)$ of equal length must use an edge set of the form
%\begin{align*}
%E'^{s, t} = \{\{(y_{j-1}, j-1)& , (y_{j-1}+ a_j, j)\} \ \mid \ \\
%&j \in [k], \{y_j\} \in [kN], \{a_j\} \in A \}.
%\end{align*}
$$E'^{s, t} = \{\{(y_{j-1}, j-1) , (y_{j-1}+ a_j, j)\} \ \mid \ j \in [k], \{y_j\} \in [kN], \{a_j\} \in A \}.$$
Since these edges form a path from $s$ to $t$, we deduce that  $y_0=x$ and $y_k = x +k \cdot a$. 
Combining this with the equations for these $k$ edges, we obtain
$$ k \cdot a = \sum_{j=0}^{k-1} a_j .$$
Since $A$ is a $k$-average free set, this implies that $a = a_0 = \cdots = a_{k-1}$, and so $\rho'(s, t) = \rho(s, t)$.  Therefore, $\rho(s, t)$ is the unique shortest path between $s$ and $t$, with length $k$.

Next, we show that these shortest paths are edge disjoint.
Any pair is determined by a starting point $(x,0)$ for some $x \in [N]$ and an element $a \in A$.
For any such pair, all the edges on the shortest path have the form $\{(c,j), (c+a,j+1)\}$.
Therefore, two pairs that disagree on the element $a$ cannot share any edge on the shortest path.
Moreover, if two pairs agree on the element $a \in A$ but disagree on the starting point (i.e. one starts at $(x,0)$ and the other starts at $(y, 0)$), then the corresponding shortest paths cannot even share a node:
all nodes on the first path have the form $(x + j \cdot a, j)$, while nodes on the second path have the form $(y + j \cdot a, j)$.
Therefore, the only way two paths can share a node is if $a + j \cdot a = b + j \cdot a$ and so $a = b$.

To conclude the proof we bound the parameters of our construction.
Since $\alpha=\eps/2$ and $\beta<\eps/10$, we have that
%\begin{align*}
%|P|=& \Omega\left(N^{2-\alpha}\right) = \Omega\left(n^{\frac{2-\alpha}{1+2\beta}}\right) \\
%\ge& \Omega\left(n^{\frac{2 - \eps/2}{1 + \eps/5}} \right) = \Omega\left(n^{2 - \frac{9\eps}{2\eps + 10}} \right) \ge \Omega\left(n^{2-\eps}\right)
%\end{align*}
$$|P|= \Omega\left(N^{2-\alpha}\right) = \Omega\left(n^{\frac{2-\alpha}{1+2\beta}}\right)\ge \Omega\left(n^{\frac{2 - \eps/2}{1 + \eps/5}} \right) = \Omega\left(n^{2 - \frac{9\eps}{2\eps + 10}} \right) \ge \Omega\left(n^{2-\eps}\right)$$
where last inequality follows from the fact that $0 < \eps \le 1$.  In the statement of the lemma we actually require that $|P|=\Theta(n^{2-\eps})$ as opposed to $\Omega(n^{2-\eps})$, but this follows because picking any appropriately-sized subset of $P$, which will still satisfies all the other properties.
Finally, note that the distances are $k= \Theta(N^{\beta})$ which is $\Theta(n^{\delta})$ for $\delta = \frac{\beta}{1+2\beta}$.  Finally, this setting of $\delta$ straightforwardly implies that $0 < \delta < \beta < \eps$, so the condition $0 < \delta < \eps$ holds.%
\end{proof}
%\end{document}
%\input{distpres}

\end{document}